\newtheorem{theorem}{Theorem}[section]
\newtheorem{corollary}{Corollary}[theorem]
\newtheorem{lemma}[theorem]{Lemma}
\newtheorem{defn}[theorem]{Definition}
\providecommand{\keywords}[1]{\textbf{\textit{Index terms---}} #1}
\DeclarePairedDelimiter\ceil{\lceil}{\rceil}
\DeclarePairedDelimiter\floor{\lfloor}{\rfloor}
\title{Multidimensional segment trees can do range updates in poly-logarithmic time}
\author[1]{Nabil Ibtehaz}
\author[1]{M. Kaykobad}
\author[1,*]{M. Sohel Rahman}
\affil[ ]{ 1017052037@grad.cse.buet.ac.bd \\ \{kaykobad,msrahman\}@cse.buet.ac.bd}
\affil[1]{Department of CSE, BUET,\protect\\ ECE Building, West Palasi, Dhaka-1205, Bangladesh}
\affil[*]{Corresponding author}
\begin{document}

\clearpage
\setcounter{page}{1}

\maketitle

\begin{abstract}
Updating and querying on a range is a classical algorithmic problem with a multitude of applications. The Segment Tree data structure is particularly notable in handling the range query and update operations. A Segment Tree divides the range into disjoint segments and merges them together to perform range queries and range updates elegantly. Although this data structure is remarkably potent for 1-dimensional problems, it falls short in higher dimensions. Lazy Propagation enables the operations to be computed in $O(logn)$ time in a single dimension. However, the concept of lazy propagation could not be translated to higher-dimensional cases, which imposes a time complexity of $O(n^{k-1} \; logn)$ for operations on $k$-dimensional data. In this work, we have made an attempt to emulate the idea of lazy propagation differently so that it can be applied for 2-dimensional cases. Moreover, the proposed modification should be capable of performing most general aggregate functions similar to the original Segment Tree, and can also be extended to even higher dimensions. Our proposed algorithm manages to perform range sum queries and updates in $O(\log^2 n)$ time for a 2-dimensional problem, which becomes $O(\log^d n)$ for a $d$-dimensional situation.
\end{abstract}

\keywords{Dynamic range query, Lazy Propagation, Multidimensional data, Range sum query, Segment Tree, Tree data structures}

\newpage

\section{Introduction}
Range queries appear frequently in different problems in Computer Science. Many interesting and useful problems can also be reduced to range queries. For example, the problem of computing the Lowest Common Ancestor (LCA) of two nodes in a tree can be reduced to a range minimum query problem \cite{bender2000lca} to solve it efficiently \cite{harel1984fast}. Also range updates and range queries have a lot of diversified applications in domains like database theory \cite{pagel1993towards,mykletun2006aggregation}, sensor networks \cite{li2003multi}, image processing \cite{deng2001efficient}, cryptography \cite{boneh2007conjunctive}, computational geometry \cite{de1997computational,easwarakumar2015bits}, geographic information systems \cite{samet1984geographic} etc.

In its simplest form, the dynamic range query problem involves two operations, namely, query and update. Suppose we are considering range sum queries and updates; then a query for a given range will return the sum of all the elements within the supplied range. On the other hand, an update in this context adds a specific value (given with the update request) to each of the elements within the range. In this context, the aim is to construct a data structure that can be used to efficiently answer queries and handle updates.

The Segment Tree \cite{de1997computational} data structure is particularly notable in handling the range query and update operations. A Segment Tree is a complete binary tree where the nodes operate on segments. These segments are divided into two equal or near equal parts recursively and are merged later on to perform dynamic range query operations efficiently in logarithmic time. Compared to other similar data structures like Fenwick Trees \cite{fenwick1994new}, Segment Trees are more versatile, which has led to diversified applications of this tree-based data structure. Segment Trees have been used in different topics in Computer Science including but not limited to networking \cite{zheng2006distributed, gupta2001algorithms}, computer vision \cite{mei2013segment}, computational geometry \cite{bentley1980optimal,chazelle1994algorithms} etc. to name a few.

However, if we consider update operations, Segment Trees have only been able to enjoy success in a single dimension. In higher dimensions, it suffers greatly due to its inability to handle range updates efficiently therein. In single dimension, Segment Trees exploit the idea of Lazy Propagation that allows it to perform range updates in logarithmic time \cite{halim2013competitive}. But unfortunately, this trick does not generalize to higher dimensions. In spite of its heavy use in solving different problems in the literature, we do not find much works therein focusing on improving the Segment Tree itself. In this paper, we have made an attempt to improve the performance of the update operation of Segment Trees in higher dimensions. In particular, we propose and incorporate a novel idea of scaled update and partial query that in combination with the concept of lazy propagation equip Segment Trees with the capability to perform range update operations in two-dimensional context efficiently (Section \ref{sec:res}). We also discuss how our approach generalizes to higher ($> 2$) dimensions (Section \ref{sec:dis}). To the best of our knowledge, this is the first (successful) attempt to achieve this feat and we believe that our modified Segment Tree data structure will be extremely useful in diversified applications.

\section{Related Works}
\label{sec:rela}

Dynamic range queries have been a very popular problem in the algorithmic community. Lueker \cite{lueker1978data} proposed a data structure that performs orthogonal range query for some given points in a multidimensional space. However, this orthogonal range query was limited to returning the number of points within a specified range, updated through inserting and deleting points. Lueker later adopted a different transformation on decomposable online searching problems \cite{lueker1979transformation}. The data structure proposed by Willard \cite{willard1985new} included the range sum operation as well. Fredman \cite{fredman1981lower} explored the lower bound of such operations and proved that a sequence of $n$ insertions, deletions and range queries intermixed together demands a lower bound of $O(n(logn)^d)$, where $d$ denotes the dimensionality of the space.

However, the problem we are concerned with deals with a special version of this problem. Instead of working in a Cartesian space, our motivation lies in array operations. The complexity involved in offline partial sum computation was discussed by Chazelle \cite{chazelle1991complexity}. The dynamic range sum query problem has been sufficiently addressed by Fredman \cite{fredman1982complexity}, where he described ways to perform dynamic range sum query operations on a 1-dimensional array. However, his update operation was restricted to point-updates, i.e., updating a single value by adding a constant with it instead of updating a range which we are concerned with. He demonstrated how a balanced binary tree covering the various ranges, which is commonly termed as Segment Tree \cite{de1997computational} by the folklore, can be utilized to solve such problem in logarithmic time. He, however, presented an open question regarding the generalization of the model for higher dimensional spaces. Toni \cite{toni2003average} similarly performed range sum queries with point-update, albeit using a graph data structure. Tight bounds of partial sums were further discovered by \cite{puaatracscu2004tight}. Fenwick also proposed a novel data structure to perform range sum queries \cite{fenwick1994new}. However, none of these work accounted for updates in a range. 

Similar to the folklore solution of using balanced binary tree i.e., Segment Tree to solve point-update based dynamic range sum query problem, another technique became very popular to perform range sum queries even after updating ranges instead of a single index. This method involves Segment Trees in managing the range updates and computing the online queries using a novel idea called Lazy Propagation (please refer to Section \ref{sec:preli} and Section \ref{app:lazy}). However, this approach is limited to a single dimension only, which prevents us from solving the dynamic range sum query for more than one dimensions using Segment Trees. Mishra \cite{mishra2013new} elaborates on this fact by proving how higher dimensional Segment Trees cannot leverage the benefits of lazy propagation. In the same work, Mishra also proposes some modifications to higher dimensional Fenwick trees to compute dynamic range sum query on a multidimensional space.

\section{Preliminaries}
\label{sec:preli}

\subsection{Problem Specification}
In this paper we primarily discuss range sum queries and updates over a two dimensional array. A sub-array of a two dimensional array $A$ is $A[x_1:x_2 , y_1:y_2]$, where $x_1$,
$x_2$ are coordinates along the 1st dimension ($x$ dimension), and $y_1$, $y_2$ are coordinates along the 2nd dimension ($y$ dimension). It consists of all the elements $A[x,y]$, such that $x_1 \leq x \leq x_2$ and $y_1 \leq y \leq y_2$. The $query$ operation on the aforementioned sub-array returns the sum of all the elements within the sub-array:

$$ query(A[x_1:x_2,y_1:y_2]) = \sum_{x=x_1}^{x_2} \sum_{y=y_1}^{y_2} A[x,y] $$

The $update$ operation, on the other hand, adds a constant value $c$ to all the elements of the sub-array:

$$ update(A[x_1:x_2,y_1:y_2],c) \rightarrow A[x,y] = A[x,y] + c ,\forall x_1 \leq x \leq x_2, y_1 \leq y \leq y_2 $$

\subsection{The Segment Tree Data structure}
A Segment Tree can perform range queries and updates in logarithmic time following a divide and conquer approach \cite{de1997computational}. It is defined on a segment, which is recursively divided into two smaller segments and merged together to compute aggregate functions on the segments. Although Segment Tree following this approach can only perform point updates, the concept of `Lazy Propagation' allows the data structure to perform range queries and updates in logarithmic time. A Segment Tree can be extended to higher dimensions by successively cascading it. In a 2D Segment Tree, each of the nodes of the Segment Tree contains another Segment Tree therein. Thus, two segments along the two dimensions are simultaneously considered. However, lazy propagation concept does not generalize to higher dimensions and hence 2D (or higher dimensional) Segment Trees cannot do better than $O(n^{d-1}\log n)$ time query and update operations, where $n$ is the maximum size of the array in any one dimension and $d$ is the number of dimensions \cite{mishra2013new}. The operations of Segment Tree are illustrated in Figure \ref{fig:1dseg}. A more comprehensive overview of Segment Trees can be found in Appendix \ref{sec:aover}.

\begin{figure}[h]
    \centering
    \begin{subfigure}[b]{0.3\textwidth}
        \includegraphics[width=\textwidth]{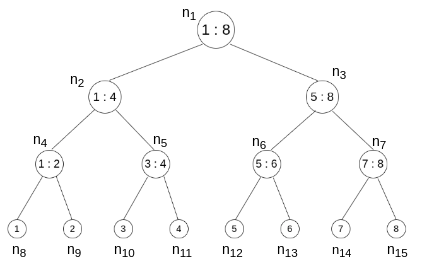}
        \caption{Segment Tree}
        \label{fig:1dseg1}
    \end{subfigure}
    ~ %add desired spacing between images, e. g. ~, \quad, \qquad, \hfill etc.
      %(or a blank line to force the subfigure onto a new line)
    \begin{subfigure}[b]{0.3\textwidth}
        \includegraphics[width=\textwidth]{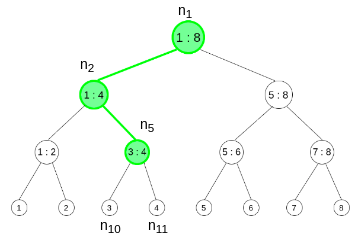}
        \caption{Update(3,4)}
        \label{fig:1dseg2}
    \end{subfigure}
    ~ %add desired spacing between images, e. g. ~, \quad, \qquad, \hfill etc.
    %(or a blank line to force the subfigure onto a new line)
    \begin{subfigure}[b]{0.3\textwidth}
        \includegraphics[width=\textwidth]{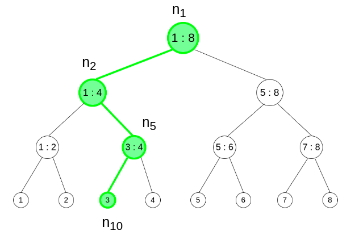}
        \caption{Query(3,3)}
        \label{fig:1dseg3}
    \end{subfigure}
    \caption{1 Dimensional Segment Tree. \ref{fig:1dseg1} shows the structure of the Segment Tree, which is a complete binary tree and all the nodes are defined on a segment, while the leaf nodes work on individual indices. \ref{fig:1dseg2} illustrates the lazy propagation operation. While updating the range (3,4), upon entering the node operating on (3,4) range, we update lazy values and backtrack instead of traversing to the individual leaf nodes. \ref{fig:1dseg3} demonstrates the query operation. While querying on the index 3, we keep dividing the regions into 2 segments until the node is reached. Along the way, the lazy values are passed and finally, the value from the node is retrieved, combining them we obtain the result of the query. }\label{fig:1dseg}
\end{figure}

\subsection{Terminologies}
\label{sec:termino}
For the rest of the paper, we denote the 1st (2nd) dimension as $x$-dimension ($y$-dimension) or simply $x~(y)$. We use the term \textit{region} to mean the sub-array of the array spanned by that region, i.e., indices of the elements. Also by $\log n$ we imply $\log_2n$. We also define the terms $x$-superregion and $x$-subregion as follows.
\begin{defn}
A region $r \equiv [x_1:x_2,y_1:y_2]$ is a $x$-superregion of a region $r'\equiv [x'_1:x'_2,y'_1:y'_2]$, if the ranges along $y$-dimension are equal in both cases (i.e., $y_1=y'_1, y_2=y'_2$ ), but the range along $x$-dimension of $r'$ is a proper subrange of that of $r$ ($[x_1:x_2] \cup [x'_1:x'_2] = [x_1:x_2], [x_1:x_2] \cap [x'_1:x'_2] = [x'_1:x'_2], [x_1:x_2] \neq [x'_1:x'_2]$). On the other hand, in the above, $r'$ is said to be a $x$-subregion of the region $r$.
\end{defn}
In our approach, we propose and implement two different types of updates in the Segment Tree data structure (See Section \ref{sec:algoupdate} for details) as defined below.
\begin{defn}[Dispersed and Intended Updates.] Consider an update query on a region $r\equiv [x_{start}:x_{end},y_{start}:y_{end}]$. Now suppose, $r$ is a $x$-subregion of $r' \equiv [x_1':x_2',y_1':y_2']$. Also suppose that $r$ is either a $x$-superregion of $r'' \equiv [x_1'':x_2'',y_1'':y_2'']$, or completely matches with $r''$. If a node is operating on the region $r'~(r'')$, then this is a `Dispersed Update' (`Intended Update').%Then, we refer to the update operation as `Dispersed Update' (`Intended Update') when we are updating the node $n$ operating on the region $r'~(r'')$.
\end{defn}
%\begin{defn}[Intended Update.] Again, consider updating a region $r\equiv [x_{start}:x_{end},y_{start}:y_{end}]$ such that $r$ is either a $x$-superregion of $r' \equiv [x_1:x_2,y_1:y_2]$, or completely matches with $r'$. We refer to the update operation as `Intended Update' when we are updating a node $n$ operating on range $r'$.
%\end{defn}

Furthermore, we classify the queries into two classes as defined below.
\begin{defn}[Partial and Complete Queries.]
While querying on a range $r \equiv [x_{start}:x_{end},y_{start}:y_{end}]$, we perform a Partial (Complete) Query on a node operating on range $r'\equiv [x_1:x_2,y_1:y_2]$ such that $r$ is a $x$-subregion ($x$-superregion) of $r'$ (or completely matches with $r'$).
\end{defn}

In our proposed 2D Segment Tree, at each node, we store two types of values and lazy updates, namely, Global and Local. Global values and global lazy updates propagate along the first dimension. These are the results of Intended Updates and are used both in Partial and Complete Queries. Local values and local lazy updates, on the contrary, remain confined within the first dimension and they do not propagate. They account for Dispersed Updates and are only considered during Complete Query. In case of Partial Queries, we dilute, i.e., scale down the values stored in the nodes by scaling them, and we only consider the global values stored in the nodes. In the cases of Complete Queries, we do not dilute the values stored in the nodes and consider both local and global values. We discuss the details in the next section.

In what follows we will be using the following notations with respect to a Segment Tree. A node $node$ of a Segment Tree is defined by the following attributes:
\begin{itemize}
    \item $node.range$: The range of the node specifies the segment it is operating on.
    \item $node.size$: The size denotes the length, i.e., the number of elements in the segment it is operating on.
    \item $node.value$: The value of a node equals the result of the function of consideration when operated on the segment the node is defined on. For example, for our range sum query problem, the value represents the sum of all the elements in the segment.
    \item $node.lazy$: The value of the lazy update of a node signifies that all the descendants of that node should be updated by this value implicitly.
\end{itemize}
In our proposed modification we decompose the values and lazy updates into two components `Global' and `Local'. Therefore these quantities are denoted by $node.global.value$, $node.global.lazy$, $node.local.value$, $node.local.lazy$ for Global and Local components respectively.

\section{Main Results}
\label{sec:res}
In this section we present our proposed approach, prove its correctness and analyze the time and space complexity thereof.

\subsection{Proposed Algorithm}
\label{sec:proalg}
In what follows we assume that we are handling the dynamic range sum query problem over a two-dimensional array $A$ of size $n \times m$.
\label{sec:algo}

\subsubsection{Construction}
The construction of our proposed 2D Segment Tree is almost identical to that of the original 2D Segment Tree. However, in our proposed Segment Tree, at every node, we store two types of values, namely, local values and global values (as discussed in the previous section). The reason for this and further details thereof will be spelled out shortly.

\subsubsection{Update Operations}
\label{sec:algoupdate}
The goal of an \emph{update} operation in our context is to update a region $r \equiv [x_{start}:x_{end},y_{start}:y_{end}]$ by adding a constant value $c$ (supplied with the update request) to each of the elements of the array $A$ residing within the region $r$. Similar to the original 2D Segment Tree algorithm, our modified algorithm first divides the regions along the first dimension ($x$ dimension), and then starts breaking the regions along the second dimension ($y$ dimension). The update algorithm starts from the root node $node_R$, which is defined on the entire range along the $x$ dimension, i.e., $[1:n]$. If the region $r \equiv [x_{start}:x_{end},y_{start}:y_{end}]$ covers the entire range along the $x$ dimension, i.e., $x_{start}=1,~x_{end}=n$, we do not break the regions along the $x$ dimension; rather we go inside the root node $node_R$, which itself contains a 1D Segment Tree $T_{node_R}$. We then start dividing the range along the $y$ dimension following the classical 1D Segment Tree algorithm. We again start from the root node of the 1D Segment Tree, $T_{node_R}$, that accounts for all the regions whose range along $x$ dimension is $[1:n]$. We then start breaking the regions along $y$ dimension until we obtain a node $node_{R_j}$, that covers the region $r$ or its subregions. Then we update the global values of the node $node_{R_j}$ and set the global lazy value as shown in Equations \eqref{eqn:globupd} and \eqref{eqn:globupdlazy} below.

\begin{equation}
\label{eqn:globupd}
    node_{R_j}.global.value = node_{R_j}.global.value + node_{R_j}.size \times c
\end{equation}
\begin{equation}\label{eqn:globupdlazy}
    node_{R_j}.global.lazy = node_{R_j}.global.lazy + c
\end{equation}

After updating the node $node_{R_j}$, as we backtrack to the root of $T_{node_R}$, along the way for each node $node_{R_{j'}}$ we modify its global value as follows (Equation \eqref{eqn:globupd}): we assign the sum of (a) the global values of its left child $node_{R_{j'|l}}$ and right child $node_{R_{j'|r}}$ and (b) the product of global lazy values and the size of the node.

\begin{equation}\label{eqn:globupdbacktrack}
\begin{split}
    node_{R_{j'}}.global.value = node_{R_{j'|l}}.global.value + node_{R_{j'|r}}.global.value \\ + node_{R_{j'}}.global.lazy \times node_{R_{j'}}.size
\end{split}
\end{equation}

Here we are updating the global values stored in each node, and as we mentioned earlier in Section \ref{sec:termino} we term them as \emph{Intended Updates}. Now, it may be the case that the intended update region $r$ along the $x$ dimension does not span the entire range $[1:n]$. In such a scenario we divide the region along the $x$ dimension further until we arrive at a node $node_i$ that operates on a region $[x_1:x_2]$ such that it is completely contained inside the range of $r$ along the $x$ dimension, $[x_{start}:x_{end}]$, i.e., $x_1 \geq x_{start}$ and $x_2 \leq x_{end}$. Similar to the root node, the node $node_i$ encapsulates another 1D Segment Tree $T_{node_i}$ that operates on all the regions that span the range $[x_1:x_2]$ completely, i.e., the root node of tree $T_{node_i}$ is defined on the region $[x_1:x_2,1:m]$. Then we perform the same operations discussed above for the tree $T_{node_R}$.

As we are dividing the regions along $x$ dimension, not all divided regions are subregions of the intended update region. On the other hand, the intended update region can be a subregion of some divided regions. There may also exist some regions that overlap with the update region and some regions that are completely disjoint to it. Similar to the classical Segment Tree update algorithm, we can safely ignore the disjoint regions. For updating a region $r' \equiv [x'_1:x'_2]$ having overlaps with $r$, we first trim down the intended update region $r$ along $x$ dimension to get the trimmed range $[x'_{start}:x'_{end}]$, such that $x'_{start} = \max(x_{start},x'_1)$ and $x'_{end} = \min(x_{end},x'_2)$. Clearly, if the range of $x$ dimension of $r$ is contained completely within that of $r'$, the trimmed region stays the same. Thus, for the ease of implementation, we perform this trimming in both the cases.

For these cases, after dividing along $x$ dimension, we start dividing the regions along $y$ dimension following the same rules applied earlier. However, we now perform a \emph{dispersed update} as follows. Let, we intend to update the region $r$, and we are currently at a node $node$ operating on $r'$. Now let us assume that $r$ is either a subregion of $r'$ or it intersects with $r'$. As stated in the previous paragraph we perform a trimming on $r$ as a general rule. Then, we distribute the effect of updating $r$ over the whole region $r'$ and use a scaled value $c'$ (instead of $c$) as defined below:

\begin{equation}
        c' = scaling \times c = \frac{x'_{end} - x'_{start} + 1 }  {x_2-x_1+1} \times c
\end{equation}

Here, $[x'_{start}:x'_{end}]$ is the trimmed range (as discussed above); $[x_1:x_2]$ is the range covered by the region along $x$ dimension; recall that $c$ is the value supplied with the (original) update request.

This scaled value $c'$ is then dispersed in the entire region through our \emph{dispersed update} that updates the local values of the nodes as follows.

\begin{equation}
    node_{i_j}.local.value = node_{i_j}.local.value + node_{i_j}.size \times c'
\end{equation}
\begin{equation}
    node_{i_j}.local.lazy = node_{i_j}.local.lazy + c'
\end{equation}

Similar to the global values, the local values of the parent nodes are also updated as we backtrack to the root node. Suppose that during a stage of backtracking we are at node $node_{i_j}$ having $node_{i_j|l}$ and $node_{i_j|r}$ as its left and right child respectively. Then, the local value of node $node_{i_j}$ is updated as follows.

\begin{equation}
\begin{split}
    node_{i_j}.local.value = node_{i_j|l}.local.value + node_{i_j|r}.local.value \\ + node_{i_j}.local.lazy \times node_{i_j}.size
\end{split}    
\end{equation}

The pseudocode of the update operation is presented as Algorithm \ref{alg:update}.

\begin{algorithm}
\caption{Update}
\label{alg:update}
    \begin{algorithmic}
            
        \State \Call{UpdateByX}{$rootNode$, $1$, $n$, $x_{start}$, $x_{end}$, $y_{start}$, $y_{end}$, $c$}
         \Comment{Starting from the root node}
        \State
        
        \Function{UpdateByX}{$node$, $x_1$, $x_2$, $x_{start}$, $x_{end}$, $y_{start}$, $y_{end}$, $c$}
        
            \If{$(x_{1}\,:\,x_{2})$ is within $(x_{Start}\,:\,x_{End})$}  \Comment{Intended Update}
                \State \Call{UpdateByY}{$newNode$, $x_1$, $x_2$, $1$, $m$, $x_{start}$, $x_{end}$, $y_{start}$, $y_{end}$, $c$}
        
            \ElsIf{$(x_{1}\,:\,x_{2})$ is outside $(x_{Start}\,:\,x_{End})$}     \Comment{Disjoint Region}
                \State \Return 
            \Else
                \State $x_{mid} = \frac{x_1+x_2}{2}$
                \State \Call{UpdateByX}{$leftChild$, $x_1$, $x_{mid}$, $x_{start}$, $x_{end}$, $y_{start}$, $y_{end}$, $c$}
                
                \State \Call{UpdateByX}{$rightChild$, $x_{mid}+1$, $x_{2}$, $x_{start}$, $x_{end}$, $y_{start}$, $y_{end}$, $c$}
                
                \State $x'_{start} = max(x_{start},\;x_1)$
                 \Comment{Trimming}
                \State $x'_{end} = min(x_{end},\;x_2)$
                
                \State $scaling=\frac{x'_{end}-x'_{start}+1}{x_2-x_1+1}$
                 \Comment{Dispersed Update}
                \State \Call{UpdateByY}{$newNode$, $x_1$, $x_2$, $1$, $m$, $x'_{start}$, $x'_{end}$, $y_{start}$, $y_{end}$, $c \times scaling$}
            
            \EndIf
        \EndFunction
        \State
        
        \Function{UpdateByY}{$node$, $x_1$, $x_2$, $y_1$, $y_2$, $x_{start}$, $x_{end}$, $y_{start}$, $y_{end}$, $c$}
    
            \If{$(y_{1}\,:\,y_{2})$ is within $(y_{Start}\,:\,y_{End})$}
    
                \If{$(x_{1}\,:\,x_{2})$ is within $(x_{Start}\,:\,x_{End})$}
                 \Comment{Intended Update}
                    \State $node.global.value = node.global.value + c \times (x_{2}-x_{1}+1) \times (y_{2}-y_{1}+1)$
                    \State $node.global.lazy = node.global.lazy + c$
                \Else  \Comment{Dispersed Update}
                    \State $node.local.value = node.local.value + c \times (x_{2}-x_{1}+1) \times (y_{2}-y_{1}+1)$
                    \State $node.local.lazy = node.local.lazy + c$
                \EndIf
            
            \ElsIf{$(y_{1}\,:\,y_{2})$ is outside $(y_{Start}\,:\,y_{End})$}
             \Comment{Disjoint Region}
                \State \Return 
    
            \Else
                \State $y_{mid} = \frac{y_1+y_2}{2}$
                
                \State \Call{UpdateByY}{$leftChild$, $x_1$, $x_2$, $y_1$, $y_{mid}$, $x_{start}$, $x_{end}$, $y_{start}$, $y_{end}$, $c$}
                
                \State \Call{UpdateByY}{$rightChild$, $x_1$, $x_2$, $y_{mid}+1$, $y_2$, $x_{start}$, $x_{end}$, $y_{start}$, $y_{end}$, $c$}

                \State $node.local.value = leftChild.local.value + rightChild.local.value + node.local.lazy \times (x_{2}-x_{1}+1) \times (y_{2}-y_{1}+1) $
                
                \State $node.global.value = leftChild.global.value + rightChild.global.value + node.global.lazy \times (x_{2}-x_{1}+1) \times (y_{2}-y_{1}+1) $
            \EndIf
        \EndFunction

    \end{algorithmic}
\end{algorithm}

%In all these cases as we divide the regions along the x dimension and reach node $n_i$ and perform the updates, similar to the classical Segment Tree algorithm, we backtrack to root node $n_R$. Along the way, we update the values of the nodes as the sum of the values of its children.

\subsubsection{Query Operation}
\paragraph{4.1.3.1 Partial Query}

Recall that the query operation must return the sum of all the elements in the supplied range/region $r \equiv [x_{start}:x_{end},y_{start}:y_{end}]$. To do that, our algorithm starts from the root node $node_R$. We start dividing the regions along $x$ dimension until the query range along $x$ dimension (i.e., $[x_{start}:x_{end}]$) is obtained. Note that, as we travel from one node to another, while dividing the regions along $x$ dimension, even in cases where ranges along $x$ dimension mismatch, we perform a query on the Segment Tree inside each node (say node $node_i$). This query is termed as `Partial Query', where we only query the global value in such cases as follows. Moreover, we dilute, i.e., scale down the returned values (i.e., the global values stored in the nodes) so as to consider the actual contribution of the region represented by the node $node_i$, in such cases as follows. If we are at a node $node_i$ that operates on the region $[x_1:x_2,y_1:y_2]$, then, the scaling factor will be as follows.

\begin{equation}
    scaling_{node_i} = \frac{x_{end} - x_{start} + 1}{x_2 - x_1 +1}
\end{equation}

Similar to the proposed update operation, the query region is trimmed such that only the query region that falls within the region operated by the node $node_i$ is considered. In addition to considering the global values stored in node $node_i$, we also propagate the global lazy values stored therein. These lazy values do not require scaling as they are multiplied by the size of the query region. The partial result obtained from node $node_i$ is therefore as follows.
%
%\begin{equation}
%    partial\_result_n = n.global.value \times scaling_n + n.global.lazy \times query\_region.size
%\end{equation}

\begin{equation}
\begin{split}
   partial\_query_{node_i} = node_i.global.value \times scaling_{node_i} \\+ \sum_{node \in ancestors(node_i)}^{ } node.global.lazy \times query\_region.size
\end{split}
\end{equation}

\paragraph{4.1.3.2 Complete Query}
On the other hand, while dividing the regions along $x$ dimension, if we obtain a region $r'$ such that the range along $x$ dimension is either equal or is contained within that of the query region, we perform a `Complete Query'. For `Complete Query' we consider both the local and the global values. Similarly, both the local and global lazy updates are propagated. Moreover, no scaling is performed as the region already corresponds to the query region. The complete result obtained from a node $node_i$ is therefore as follows.

%\begin{equation}
%    Complete\_Result_n = n.global.value + n.local.value + lazy \times query\_region.size
%\end{equation}

\begin{equation}
\begin{split}
Complete\_Query_{node_i} = node_i.global.value + node_i.local.value + \\ 
\sum_{node \in ancestors(node_i)}^{ } (node.global.lazy + node.local.lazy) \times query\_region.size
 \end{split}
\end{equation}

\paragraph{4.1.3.3 Combining}

After performing both the `Partial Query' and `Complete Query' the results are back-propagated to the root node. From there we compute the output of query operation by combining, i.e., adding all these results together.

\begin{equation}
    Query([x_{start}:x_{end},y_{start}:y_{end}]) = \sum_{node_p \in N} partial\_query(node_p) + \sum_{node_C \in N{_x}} Complete\_Query(node_C)
\end{equation}

Here, $N_x$ contains all visited nodes operating on $r$ or regions that are $x$-subregion of $r$. The set $N$ comprises the rest of the visited nodes. The pseudocode of the query operation is presented as Algorithm \ref{alg:query}.

\begin{algorithm}
\caption{Query}
\label{alg:query}
    \begin{algorithmic}
            
        \State \Call{QueryByX}{$node$, $1$, $n$, $x_{start}$, $x_{end}$, $y_{start}$, $y_{end}$}
        \Comment{Starting from the root node}
        \State
        
        \Function{QueryByX}{$node$, $x_1$, $x_2$, $x_{start}$, $x_{end}$, $y_{start}$, $y_{end}$}
        
            \If{$(x_{1}\,:\,x_{2})$ is within $(x_{Start}\,:\,x_{End})$} \Comment{Complete Query}
                \State {\Return \Call{QueryByY}{$newNode$, $x_1$, $x_2$, $1$, $m$, $x_{start}$, $x_{end}$, $y_{start}$, $y_{end}$, $0$}}
        
            \ElsIf{$(x_{1}\,:\,x_{2})$ is outside $(x_{Start}\,:\,x_{End})$}   \Comment{Disjoint Region} 
                \State \Return 0
            \Else
                \State $x_{mid} = \frac{x_1+x_2}{2}$
                \State $x'_{start} = max(x_{start},\;x_1)$
                \Comment{Trimming}
                \State $x'_{end} = min(x_{end},\;x_2)$
                \State \Return \Call{QueryByY}{$newNode$, $x_1$, $x_2$, $1$, $m$, $x'_{start}$, $x'_{end}$, $y_{start}$, $y_{end}$, $0$}
                \State \Comment{Partial Query}
                \State \quad\qquad + \Call{QueryByX}{$leftChild$, $x_1$, $x_{mid}$, $x_{start}$, $x_{end}$, $y_{start}$, $y_{end}$}
                \State \quad\qquad + \Call{QueryByX}{$rightChild$, $x_{mid}+1$, $x_{2}$, $x_{start}$, $x_{end}$, $y_{start}$, $y_{end}$}
            \EndIf
        \EndFunction
        \State
        
        \Function{QueryByY}{$node$, $x_1$, $x_2$, $y_1$, $y_2$, $x_{start}$, $x_{end}$, $y_{start}$, $y_{end}$, $lazy$}
    
            \If{$(y_{1}\,:\,y_{2})$ is within $(y_{Start}\,:\,y_{End})$}
    
                \If{$(x_{1}\,:\,x_{2})$ is within $(x_{Start}\,:\,x_{End})$} \Comment{Complete Query}
                    \State
                    \Return $node.local.value + node.global.value + lazy \times (x_{2}-x_{1}+1) \times (y_{2}-y_{1}+1)$
                \Else \Comment{Partial Query}
                    \State $scaling = \frac{(x_{end}-x_{start}+1)}{(x_{2}-x_{1}+1)}$
                    \Comment{Diluting values}
                    \State \Return $node.global.value \times scaling + lazy \times (x_{end}-x_{start}+1) \times (y_{2}-y_{1}+1)$ 
                \EndIf
        
            \ElsIf{$(y_{1}\,:\,y_{2})$ is outside $(y_{Start}\,:\,y_{End})$}
            \Comment{Disjoint Region}
                \State \Return 0
    
            \Else
                \State $y_{mid} = \frac{y_1+y_2}{2}$
    
                \If{$(x_{1}\,:\,x_{2})$ is within $(x_{Start}\,:\,x_{End})$}
                \Comment{Complete Query}
                    \State \Return \Call{QueryByY}{$leftChild$, $x_1$, $x_2$, $y_1$, $y_{mid}$, $x_{start}$, $x_{end}$, $y_{start}$, $y_{end}$, $lazy+node.global.lazy+node.local.lazy$ }
                    \State \quad \qquad $+$\Call{QueryByY}{$rightChild$, $x_1$, $x_2$, $y_{mid}+1$, $y_2$, $x_{start}$, $x_{end}$, $y_{start}$, $y_{end}$, $lazy+node.global.lazy+node.local.lazy$ }
            
                \Else \Comment{Partial Query}
                    \State \Return \Call{QueryByY}{$leftChild$, $x_1$, $x_2$, $y_1$, $y_{mid}$, $x_{start}$, $x_{end}$, $y_{start}$, $y_{end}$, $lazy+node.global.lazy$ }
                    \State \quad \qquad $+$\Call{QueryByY}{$rightChild$, $x_1$, $x_2$, $y_{mid}+1$, $y_2$, $x_{start}$, $x_{end}$, $y_{start}$, $y_{end}$, $lazy+node.global.lazy$ }
                \EndIf
            \EndIf
        \EndFunction

    \end{algorithmic}
\end{algorithm}

\subsection{Correctness}
\label{sec:corr}
\begin{lemma}
\label{lemma:updglob}
`Intended updates' update the regions as well as their subregions precisely.
\end{lemma}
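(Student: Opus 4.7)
The plan is to reduce the claim to the correctness of classical one-dimensional lazy propagation applied to the inner $y$-dimension tree. By hypothesis, an intended update fires at a node $n$ whose $x$-range is fully contained in (or equal to) the $x$-range of the update region $r$. Hence every element of $n$'s region is semantically required to be incremented by the full constant $c$, and the same holds for every $x$-subregion of $n$'s region. The task is to show that Equations \eqref{eqn:globupd}--\eqref{eqn:globupdbacktrack} maintain this property.

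First, I would fix the inner 1D Segment Tree inside the relevant $x$-dimension node and restrict attention to the $y$-dimension. The three update equations are identical in form to the standard 1D segment tree lazy propagation rules, with $global.value$ and $global.lazy$ playing the roles of the ordinary stored value and lazy counter. I would establish the invariant: at every node $n$, the pair $(n.global.value,\, n.global.lazy)$ together with the unapplied $global.lazy$ values stored in its ancestors completely determines the correct sum over the region of $n$.

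Second, I would verify the invariant by induction on the recursion. For the base case (the recursion halts at a node $n_{R_j}$ whose $y$-range is contained in $[y_{start}:y_{end}]$), Equation \eqref{eqn:globupd} adds $n_{R_j}.size \times c$ to $n_{R_j}.global.value$, which exactly accounts for adding $c$ to each of the $n_{R_j}.size$ elements of the region, and Equation \eqref{eqn:globupdlazy} stores $c$ in $n_{R_j}.global.lazy$ so the same increment is eventually propagated to subregions. For the inductive back-propagation step, Equation \eqref{eqn:globupdbacktrack} rebuilds the parent's value as the sum of its children's (already correct) values plus $n_{R_{j'}}.global.lazy \times n_{R_{j'}}.size$, which is precisely the standard lazy-propagation recurrence and therefore preserves the invariant one level up.

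The main obstacle is being careful about what ``precisely'' means for a subregion whose ancestor lazy values have not yet been pushed down: the statement must be read as saying that the stored value plus the yet-to-be-applied ancestor lazy contribution yields the correct sum of the region, not that $n.global.value$ in isolation does. Once this reading is adopted, the argument collapses to the classical correctness proof of 1D segment trees with lazy propagation, and the extra complication introduced by two dimensions does not enter: intended updates act entirely within a single inner $y$-tree, always with the full constant $c$, so no scaling factor is ever required.
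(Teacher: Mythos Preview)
Your proposal is correct and follows essentially the same approach as the paper: both reduce the claim to the observation that intended updates operate within a single inner $y$-tree with the full constant $c$, so the classical 1D lazy-propagation mechanism (global values plus global lazy counters passed to descendants) applies verbatim. Your version is in fact considerably more detailed than the paper's proof, which simply declares the result ``trivial'' and sketches in one paragraph that the algorithm divides until the target is reached, updates global values, and relies on global lazy values being passed down to subregions; your explicit invariant and inductive verification of Equations~\eqref{eqn:globupd}--\eqref{eqn:globupdbacktrack}, together with your careful reading of what ``precisely'' means for subregions with pending ancestor lazy values, supply rigor that the paper omits.
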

\begin{proof}
From the steps of our proposed update algorithm (Section \ref{sec:algoupdate}), it is trivial to show that the algorithm updates the actual regions correctly. In all the situations, the goal of the proposed algorithm is to keep on dividing the ranges until the actual update region or its subregion is obtained. After reaching that, the algorithm updates the global values stored in the node and starts backtracking to the root node. In cases when the algorithm reaches the actual region but not its subregion, the subregions are still updated as the updates are stored in the global values which are passed to the decedent subregions anyway. Hence, it is ensured that the `Intended Updates' not only precisely update the actual region, but also its subregions.
\end{proof}

From the definition of 1D Segment Tree, while updating a region $r$ with a value $c$, a node $node$ operating on a subregion $r'$ is updated using the following rule:
\begin{equation}
\label{eqn:lemma3.2.1}
    {node.value} = {node.value} +{c} \times {node.size}
\end{equation}
Recall that $node.size$ refers to the size of the region the node is operating on, i.e., $node.size = ||r'||$. Unfortunately, when working on 2 or higher dimensions, it has not been possible to perform such operations along all but the innermost dimension. This motivated us to propose relaxed updates along $x$ dimension, which we termed as `Dispersed Updates'. Now we have the following lemma.

\begin{lemma}
\label{lemma:dspUpd}
The effect of updating a proper subregion of a region is realized by `Dispersed Updates'.
\end{lemma}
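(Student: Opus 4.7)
The plan is to interpret the lemma as an invariant statement: immediately after a dispersed update is performed at a node $n_i$ whose $x$-range $[x_1:x_2]$ is a proper $x$-superregion of the intended update's $x$-range, the local channel of the inner one-dimensional tree $T_{n_i}$ correctly records the true additive contribution of the intended update to every 2D region indexed by a node of $T_{n_i}$.

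First, I would fix the setup. Let the intended update add $c$ to every cell of $[x_s:x_e,\, y_s:y_e]$. Since $[x_s:x_e]$ is properly contained in $[x_1:x_2]$, the trimmed range along $x$ is simply $[x_s:x_e]$ itself, and the scaled constant takes the form $c' = \frac{x_e - x_s + 1}{x_2 - x_1 + 1}\, c$. The heart of the argument is then a direct calculation for any inner node $n_{i_j}$ of $T_{n_i}$ whose $y$-range $[y_{j_1}:y_{j_2}]$ is fully contained in $[y_s:y_e]$:
\[
c' \cdot n_{i_j}.size \;=\; \frac{x_e - x_s + 1}{x_2 - x_1 + 1}\; c\; (x_2 - x_1 + 1)(y_{j_2} - y_{j_1} + 1) \;=\; c\,(x_e - x_s + 1)(y_{j_2} - y_{j_1} + 1).
\]
This is precisely the additive effect of the intended update on the 2D block $[x_1:x_2,\, y_{j_1}:y_{j_2}]$, because the intended region intersects that block exactly in the rectangle $[x_s:x_e,\, y_{j_1}:y_{j_2}]$ of the indicated area. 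The cancellation of the factor $(x_2 - x_1 + 1)$ is exactly what the scaling factor in the dispersed-update rule is engineered to produce.

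Next, I would extend the invariant to inner nodes whose $y$-range only partially overlaps $[y_s:y_e]$. Here I would observe that the local-value update, the local-lazy assignment, and the backtracking rule for $n_{i_j}.local.value$ are structurally identical to the equations of a classical 1D lazy segment tree running on $T_{n_i}$ with input constant $c'$. Therefore the standard correctness of 1D lazy propagation (Appendix~\ref{sec:aover}) extends the invariant to every node of $T_{n_i}$, with the local-lazy values carrying the unmaterialized part as usual.

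The main obstacle I anticipate is linguistic rather than technical: the word ``realized'' is not defined in the lemma, so the bulk of the work is selecting the right invariant to verify. Once the invariant is fixed as ``the local channel of $T_{n_i}$ stores the true additive contribution of the intended update to the 2D region of each inner node, in the usual lazy-propagated sense'', only the single scaling identity above is genuinely new, and everything else is inherited from the 1D case.
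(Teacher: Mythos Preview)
Your proposal is correct and follows essentially the same route as the paper: both arguments hinge on the single cancellation identity showing that the scaled increment $c' \cdot n.size$ equals $c$ times the number of cells actually touched, i.e., $c \cdot \lVert r \rVert$ rather than the erroneous $c \cdot \lVert r' \rVert$. Your treatment is somewhat more thorough in that you explicitly extend the invariant to inner $y$-nodes with only partial overlap by invoking standard 1D lazy-propagation correctness, whereas the paper reduces without loss of generality to the case where the $y$-range already matches and leaves the rest implicit.
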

\begin{proof}
Recall that, unlike the $x$ dimension the ranges along $y$ dimension are broken completely while performing the `Dispersed Updates'. Without any loss of generality we can consider a proper subregion as a proper $x$-subregion. Suppose we intend to update $r \equiv [x_{start}:x_{end},y_{start}:y_{end}]$, and the node $node$ is associated to $r' \equiv [x_1:x_2,y_{start}:y_{end}]$, which is a proper $x$-superregion of $r$. So, we have $r' \cup r = r', r' \cap r = r, r \neq r'$. When updating the node $node$ by adding $c$, Equation \eqref{eqn:lemma3.2.1} suggests to add $c$ to all the $||r'||$ elements of $r'$. But in reality, only $||r||$ number of elements are actually being updated. Thus the node value erroneously increases by $c \times ||r'||$, differing from the expected increment of $c \times ||r||$.
%This complication prevents the two dimensional Segment Tree from updating the bigger super-regions and leaves them unmodified. Thus the modifications of the inner subregions are not acknowledged by the encompassing super-region.
Now, `Dispersed Update' scales the update value $c$ to $c' = c \times s$ by a ratio $s = \frac{||r||}{||r'||}$ ,  thereby updating the values correctly.%where $r'$ is covered by node $n$ and we intend to update $r$. 
%\begin{equation}
%    scaling,s = \frac{||r||}{||r'||}
%\end{equation}
%
%\begin{equation}
%    \textit{scaled update value},c' = c \times s = c \times \frac{||r||}{||r'||}
%\end{equation}
%Hence, when this $c'$ is used instead of $c$, the stored value of node $n$ is increased by $c \times ||r||$, instead of increasing by $c \times ||r'||$ wrongfully. 
%Thus, a dispersed update ensures capturing the proper effect of updating a proper $x$-subregion which is also a proper subregion of a region.
\end{proof}

Recall that, a dispersed update only modifies the local values of a node $node$ (i.e.,  $node.local.value$ and $node.local.lazy$). Lemma \ref{lemma:dspUpd} claims that a dispersed update accurately captures the modifications of the subregions beneath it. Hence we have the following straightforward corollary.

\begin{corollary}
Local values (both $value$ and $lazy$) link a node to the updates of its descendants.
%, i.e., the nodes operating on the subregions encompassed by $n$.
\end{corollary}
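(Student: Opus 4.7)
The plan is to unpack the word ``link'' into two concrete claims and verify each in turn. First, every update whose intended region lies strictly inside the region $r'$ of a node $n$ must leave a trace in $n.local.value$ and $n.local.lazy$; and second, that trace faithfully represents the aggregate effect of those descendant updates on the sum over $r'$. Both claims should follow almost directly from Lemma \ref{lemma:dspUpd} together with the bookkeeping rules of the update procedure in Section \ref{sec:algoupdate}.

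For the first direction, I would inspect Algorithm 1: the only write-sites that touch $local.value$ or $local.lazy$ are exactly the dispersed-update branches, i.e.\ the branches taken when the current node's $x$-range strictly contains the trimmed update range. Hence whenever the tree performs an update on a descendant region of $r'$, the path from the root down to that descendant necessarily traverses $n$, and at $n$ the algorithm invokes precisely the dispersed-update step. So no descendant update can bypass the local fields of $n$.

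For the second direction, I would invoke Lemma \ref{lemma:dspUpd} to conclude that each such dispersed write increments $n.local.value$ by exactly $c\cdot\|r\|$, i.e.\ the true contribution of the descendant update to the sum over $r'$, rather than the inflated $c\cdot\|r'\|$ that a naive update would produce. Since $n.local.value$ accumulates these correctly scaled increments additively, and $n.local.lazy$ records the per-element residual that still has to be pushed to the children on subsequent visits, the pair $(n.local.value, n.local.lazy)$ jointly encodes the full history of descendant updates under $n$ that have not yet been materialised at the leaves. This is exactly the ``link'' asserted by the corollary.

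The main obstacle, I expect, is not any single deep step but rather nailing down the bookkeeping invariant: namely that after every update operation, $n.local.value$ equals the sum of the true (correctly scaled) contributions to $r'$ of every dispersed update performed on an ancestor or on $n$ itself that has not been pushed further down, plus the contributions of subtree updates already incorporated from the children via the backtracking rule. Once this invariant is stated precisely, an induction on the order of update operations (with Lemma \ref{lemma:dspUpd} as the inductive step for a single update and the backtracking rule for the aggregation) closes the argument cleanly.
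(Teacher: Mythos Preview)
Your proposal is correct and follows the same line as the paper, which treats the corollary as an immediate consequence of Lemma \ref{lemma:dspUpd} together with the observation that dispersed updates are the only operations that write to a node's local fields. Your two-direction decomposition and the bookkeeping invariant you sketch are considerably more explicit than the paper's one-sentence justification, but the underlying argument is identical.
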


Recall that while modifying a node $node$ operating on a region $r' \equiv [x_1:x_2,y_1:y_2]$ that intersects with the region of update $r \equiv [x_{start}:x_{end},y_{start}:y_{end}]$, our proposed algorithm trims $r$ to $[x'_{start}:x'_{end},y_{start}:y_{end}]$ such that $ x'_{start} = max(x_{start},x_1)$ and $x'_{end} = min(x_{end},x_2)$. Following the arguments from lemma \ref{lemma:dspUpd} we have the following corollary.
\begin{corollary}
\label{cor:intersect}
Trimming down the update regions (as mentioned above) allows the intersecting regions to be updated properly.
\end{corollary}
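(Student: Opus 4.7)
My plan is to reduce the intersection case to the $x$-subregion case already handled by Lemma \ref{lemma:dspUpd}. The key observation is that when node $n$ operates on $r' \equiv [x_1:x_2,y_1:y_2]$ and we try to update $r \equiv [x_{start}:x_{end}, y_{start}:y_{end}]$ with $r \cap r' \neq \emptyset$, the only elements whose updates are "felt" by node $n$ are those lying in $r \cap r'$. Since the algorithm has already descended to a level where the $y$-ranges of $r$ and $r'$ coincide, the intersection along the $y$-dimension is $[y_{start}:y_{end}]$, and the intersection along the $x$-dimension is $[\max(x_{start},x_1) : \min(x_{end},x_2)] = [x'_{start}:x'_{end}]$ by the definitions of the trimming operation. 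Thus the trimmed region $r_{trim} \equiv [x'_{start}:x'_{end},\, y_{start}:y_{end}]$ is precisely $r \cap r'$.

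Next, I would observe that by construction $[x'_{start}:x'_{end}] \subseteq [x_1:x_2]$, so $r_{trim}$ is either equal to $r'$ or a proper $x$-subregion of $r'$. This is exactly the configuration covered by Lemma \ref{lemma:dspUpd}. Applying that lemma to $r_{trim}$ in place of $r$, the prescribed scaling factor becomes
\begin{equation*}
s \;=\; \frac{\lVert r_{trim} \rVert}{\lVert r' \rVert} \;=\; \frac{x'_{end} - x'_{start} + 1}{x_2 - x_1 + 1},
\end{equation*}
which is exactly the scaling factor used by the algorithm for the intersecting case. Hence the dispersed update contributes $c \times \lVert r_{trim} \rVert = c \times \lVert r \cap r' \rVert$ to $n.local.value$, which is the correct increment since those are precisely the elements in $r'$ that ought to be incremented by $c$.

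Finally, I would argue global consistency: different visited nodes at the same level in the $x$-decomposition operate on disjoint $x$-ranges, so their trimmed contributions $r \cap r'$ partition (the relevant portion of) $r$ and cannot double-count. Combining this with Lemma \ref{lemma:updglob}, which already handles the fully-contained case along $x$, we conclude that after all trimmed dispersed updates and intended updates, every element of $r$ has been accounted for exactly once, which establishes the corollary.

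The main obstacle, I expect, is the bookkeeping in this last step: one must be careful that the $x$-decomposition produces disjoint ranges across sibling subtrees so that the union of the trimmed regions over all visited nodes exactly reconstitutes $r$ without overlap. Everything else reduces cleanly to Lemma \ref{lemma:dspUpd} once the identification $r_{trim} = r \cap r'$ is made explicit.
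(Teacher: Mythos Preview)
Your proposal is correct and follows essentially the same route as the paper: the paper states the corollary as an immediate consequence of the arguments in Lemma~\ref{lemma:dspUpd}, and your reduction---identifying the trimmed region with $r\cap r'$, observing it is an $x$-subregion of $r'$, and invoking Lemma~\ref{lemma:dspUpd} to recover the correct scaling---is exactly that argument made explicit. The final global-consistency paragraph about disjoint $x$-ranges is more than the corollary itself requires (that bookkeeping belongs to Theorem~\ref{thm:update}), but it does no harm here.
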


\begin{lemma}
\label{lem:dilute}
Diluting the values ensures the contribution of a region to its proper subregion.
\end{lemma}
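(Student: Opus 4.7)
The plan is to analyse what the scaled quantity $n.global.value \times scaling_n$ actually represents and show that it equals the contribution to the query sum that comes from intended updates recorded at $n$. Fix a node $n$ in the inner segment tree of an outer node $n_i$ whose x-range is $[x_1:x_2]$, so $n$ operates on the region $r' \equiv [x_1:x_2, y_1:y_2]$. Let $r \equiv [x_{start}:x_{end}, y_1:y_2]$ be a proper x-subregion of $r'$, so $[x_{start}:x_{end}] \subsetneq [x_1:x_2]$, as in the definition of a Partial Query.

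First I would observe that, by construction (Section \ref{sec:algoupdate}), an intended update writes to $n.global.value$ only when the triggering update request has an x-range $[x_{start}':x_{end}']$ that completely contains $[x_1:x_2]$. In particular, the constant $c$ added to $n.global.value$ during each such update is also added implicitly to \emph{every} element $A[x,y]$ with $x \in [x_1:x_2]$ and $y$ in the y-range of $n$. Summing over all intended updates recorded at $n$, the aggregate effect on the underlying array within $[x_1:x_2] \times [y_1:y_2]$ is therefore a function $g(y)$ that depends on $y$ alone and is independent of $x$ throughout $[x_1:x_2]$. Invoking Lemma \ref{lemma:updglob}, it follows that
\[
n.global.value \;=\; \sum_{x=x_1}^{x_2}\sum_{y=y_1}^{y_2} g(y) \;=\; (x_2-x_1+1)\sum_{y=y_1}^{y_2} g(y).
\]

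Next, the portion of this intended-update contribution that actually lies inside the query region $r$ is
\[
\sum_{x=x_{start}}^{x_{end}}\sum_{y=y_1}^{y_2} g(y) \;=\; (x_{end}-x_{start}+1)\sum_{y=y_1}^{y_2} g(y),
\]
and dividing the two identities yields precisely the dilution factor $scaling_n = (x_{end}-x_{start}+1)/(x_2-x_1+1)$. Hence $n.global.value \times scaling_n$ equals the contribution of $r$ to the query sum that stems from the intended updates stored at $n$, which is exactly what ``contribution of a region to its proper subregion'' means. A routine appeal to Corollary \ref{cor:intersect} extends the argument to the case where $r$ merely intersects $r'$, via the same trimming that the update algorithm already performs.

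The main obstacle, and really the heart of the lemma, is the \emph{uniformity in $x$} of the intended-update contributions within $[x_1:x_2]$; once this uniformity is established from the update rules, the scaling identity is immediate arithmetic. I would therefore devote most of the proof to justifying that uniformity, since it is precisely the reason dilution is valid for the global component accessed by Partial Queries yet would be incorrect if applied to local (dispersed) contributions --- which are, by design, excluded from Partial Queries in our algorithm.
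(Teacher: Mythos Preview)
Your proof is correct and follows the same approach as the paper: both argue that $n.global.value$ represents a contribution spread uniformly over all $||r'||$ elements of the node's region, so scaling by $||r||/||r'||$ isolates exactly the portion lying in the query subregion $r$. You go further than the paper by explicitly deriving the $x$-uniformity of intended-update contributions from the update rules (the paper essentially asserts it), and you correctly identify this uniformity as the reason dilution is sound for global values yet would fail for local ones.
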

\begin{proof}
%The logic presented here closely resembles that of the lemma \ref{lemma:dspUpd}. 
Suppose we are considering a node $node$, defined on region $r'\equiv [x_1:x_2,y_1:y_2]$, whereas the region of query $r\equiv [x_{start}:x_{end},y_{start}:y_{end}]$ is a proper subregion of $r'$. Suppose, $node.global.value = v$. This accounts for updating all the elements contained in that region. However, this value, $v$, is the representative of all $||r'||$ elements encompassed by the region $r'$. But, we are interested in only $||r||$ elements of them belonging to the region $r$. Hence, returning the global value $v$ wrongfully over-estimates the actual value. Therefore, we dilute the actual value stored in a node $node$ by a factor as follows:

\begin{equation}
    scaling = \frac{x_{end}-x_{start}+1}{x_2-x_1+1} = \frac{||r||}{||r'||}
\end{equation}
\begin{equation}
    \textit{diluted value} = scaling \times node.global.value  = \frac{||r||}{||r'||} \times node.global.value
\end{equation}
Thus we are able to consider the updates of a region while querying on its subregion.
\end{proof}
We remark that while considering the lazy updates since we multiply them by the number of elements in the query region, $||r||$, it estimates the actual value. Now, in the lazy propagation process of the classical 1D Segment Tree, updates that are to be performed on a number of adjacent segments are stored in an ancestor node of the tree that represents a bigger segment covering the segments to be updated. During query time these updates are passed down and distributed properly such that the effects of the updates are fulfilled \cite{halim2013competitive}. Unfortunately, following the classical Segment Tree algorithm, it is only possible to implement lazy propagation in the innermost dimension, i.e., $y$ dimension \cite{mishra2013new}. As a result, it is impossible to retrieve the updates that were imposed on regions which may overlap with the query region but extends further along the $x$ dimension. Now we present and prove the following lemma.
\begin{lemma}
\label{lem:partial}
`Partial Query' mimics the lazy propagation procedure along $x$ dimension.
\end{lemma}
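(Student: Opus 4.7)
The plan is to establish a two-part correspondence between what Partial Query computes at a visited outer node $n$ and what a classical lazy-propagation query would return if lazy propagation were available along the $x$ dimension. First I would recall the invariant maintained by classical 1D lazy propagation during a root-to-target descent: the returned total decomposes into (i) a ``committed'' contribution from the explicit value stored at the target node, and (ii) a ``pending'' contribution equal to the sum over each ancestor of (ancestor's lazy) $\times$ (size of the portion of the query lying inside that ancestor's range). The proof should match the Partial Query accumulation, term for term, against these two pieces.

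Next I would translate each piece to the proposed 2D setting. The node $n$ operates on $[x_1:x_2, y_1:y_2]$ and the query region $r \equiv [x_{start}:x_{end}, y_{start}:y_{end}]$ is an $x$-subregion of $n$'s region. For the committed piece, the explicit value at $n$ is $n.global.value$, which by Lemma \ref{lemma:updglob} already records all intended updates affecting $n$ and its descendants; Lemma \ref{lem:dilute} then gives that multiplying by $scaling_n = (x_{end}-x_{start}+1)/(x_2-x_1+1)$ isolates exactly the share of that committed value belonging to the queried $x$-range. For the pending piece, every ancestor along the root-to-$n$ path fully contains the query's $x$-range (that is precisely why Partial Query is still being invoked on $n$), so the query region lies entirely inside each such ancestor's $x$-range. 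The per-ancestor contribution is therefore (global lazy) $\times$ $||r||$, and summing along the path reproduces exactly the deposit that a classical lazy push-down would have distributed into the query.

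The main obstacle, I expect, is justifying why only the \emph{global} components are consulted at this stage, i.e., why local values and local lazy updates must be deliberately excluded from Partial Query. I would address this by appealing to Lemma \ref{lemma:dspUpd}: local quantities arise only from dispersed updates, whose scaling factor was already chosen to absorb the overlap of a past update region with the outer $x$-range in which they were deposited. Scaling them a second time by $scaling_n$, or treating their lazy parts as if they propagated globally, would double-count the $x$-overlap. Hence the local components stay confined within their outer node and are consulted only by Complete Query, while Partial Query restricts itself strictly to global values together with accumulated global lazy contributions. Combining the two correspondences with this exclusion then yields exactly the behaviour that lazy propagation along the $x$ dimension would produce, which is the claim of the lemma.
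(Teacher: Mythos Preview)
Your proposal is correct and follows essentially the same route as the paper: both arguments hinge on Lemma~\ref{lem:dilute} to justify that scaling the global value at each visited outer node recovers the contribution of the encompassing $x$-superregions, and both conclude that repeating this along the root-to-target path reproduces what lazy propagation would have delivered. Your treatment is more explicit than the paper's in two respects---you separate the ``committed'' value term from the accumulated ``pending'' global-lazy term and you spell out, via Lemma~\ref{lemma:dspUpd}, why the local components must be excluded to avoid double-counting---but these are elaborations of the same underlying argument rather than a different approach.
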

\begin{proof}
Partial query only considers the global values ($value$ and $lazy$) of a node and repeatedly perform queries until the actual query region is reached. Consider a node $node$ associated to a region $[x_1:x_2 , y_1:y_2]$. Suppose, a number of ancestors of this node are updated. In order to ensure efficiency, the update algorithm terminates after modifying those parent nodes but not node $node$. Thus, while querying on node $node$, it is imperative to retrieve this information.

Since these updates were `Intended Updates' the global values were altered. By Lemma \ref{lem:dilute}, through diluting the values the actual updates of the region caused by the encompassing regions can be obtained. Therefore, by repeatedly performing this `Partial Query' for every node visited while dividing the region along $x$ dimension, all the updates that were performed on the $x$-superregions are obtained. Thus, it emulates the lazy propagation scheme along the $x$ dimension.
\end{proof}

While performing the `Complete Query', we consider both the local and global values of a node $node$. The global values are results of `Intended Updates', which correspond to the updates made on that node, i.e., region. On the other hand, local values originate from `Dispersed Updates' which capture the effects of updating the subregions of that region (Lemma \ref{lemma:dspUpd}). Hence, from the definitions and the proposed algorithm, it is evident that `Complete Query' acknowledges both the updates to a region and its subregions. So, we have the following lemma.

\begin{lemma}
\label{lem:completeQuery}
`Complete Query' not only considers the updates to a region but also the updates to the subregions of that region.\qed
\end{lemma}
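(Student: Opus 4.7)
The plan is to decompose any sequence of updates that has already been applied to the tree into two disjoint classes and show that each class is independently captured by one of the two value streams maintained at the node $n$ of interest. Let $n$ be the node operating on region $r'$ at which a Complete Query is invoked, so that $r'$ either equals the query region $r$ or is an $x$-subregion of $r$. I would classify every past update by the relationship between its target region and $r'$: (i) \emph{intended-class} updates, whose target region contains $r'$ as an $x$-subregion or coincides with $r'$ along the $x$ dimension, and (ii) \emph{dispersed-class} updates, whose target region is a proper $x$-subregion of $r'$ (possibly after the trimming step of Corollary \ref{cor:intersect} applied to partially overlapping updates). By the rules of Section \ref{sec:algoupdate}, class (i) writes only into global values and global lazies, while class (ii) writes only into local values and local lazies; hence the two streams are disjoint on $n$, so no double counting can occur when we sum their contributions.

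Next I would invoke Lemma \ref{lemma:updglob} to argue that the global component stored at $n$, combined with the global lazies propagated down from the ancestors along the descent path (exactly as in Lemma \ref{lem:partial}), precisely equals the total contribution to the elements of $r'$ coming from class (i) updates. For class (ii) I would appeal to Lemma \ref{lemma:dspUpd} together with the local-side backtracking rule: each dispersed update on a proper $x$-subregion $r'' \subsetneq r'$ is recorded using the scaled increment $c' = c \cdot ||r''|| / ||r'||$, so when the stored local value at $n$ is read out (multiplied by $n.size = ||r'||$ during aggregation), it reproduces the true total $c \cdot ||r''||$. Corollary \ref{cor:intersect} ensures that intersecting-but-not-contained updates are treated identically after trimming, so every update that affects at least one element of $r'$ is accounted for by exactly one of the two streams.

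Adding the two streams, together with the propagated ancestor lazies (both global and local, multiplied by $||r||$ as prescribed by the $Complete\_Result_n$ formula in Section \ref{sec:res}), yields precisely the sum of all element-wise increments that have ever been applied to the elements of $r'$. Because the Complete-Query branch is taken only when $r'$'s $x$-range is contained in $r$'s $x$-range, and the $y$-range of $r'$ is already inside the query's $y$-range by the standard $1$D descent, this sum is exactly the contribution $r'$ should make to the final query result. The main obstacle I expect is the bookkeeping step that certifies the clean separation between the global and local fields --- in particular, checking that trimming plus scaling cover every partial-overlap scenario without over- or under-counting; once this separation is established, the lemma follows by additivity from Lemmas \ref{lemma:updglob}, \ref{lemma:dspUpd}, and \ref{lem:partial}.
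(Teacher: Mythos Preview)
Your proposal is correct and follows essentially the same approach as the paper: separate the contributions into the global stream (Intended Updates, i.e., updates to the region itself) and the local stream (Dispersed Updates, i.e., updates to its subregions, via Lemma~\ref{lemma:dspUpd}), then observe that Complete Query reads both. The paper's own justification is a one-paragraph remark preceding the lemma (the lemma is stated with a \qed and no separate proof block), so your version is considerably more detailed---in particular your explicit no-double-counting argument and invocation of Corollary~\ref{cor:intersect} for partially overlapping updates---but the underlying decomposition and the lemmas appealed to are the same.
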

%This idea is quite similar to that of Lemma \ref{lemma:updglob}, a trivial from the definitions. 

\begin{theorem}
\label{thm:update}
The Update operation is correct.
\end{theorem}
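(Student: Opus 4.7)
The plan is to prove correctness by structural induction on the recursion of the update procedure, performing a case analysis at each visited node that mirrors the one already used in classical 1D segment trees, but enriched by the global/local split. Descending along the $x$ dimension from the root, each visited node $n_i$ on $[x_1:x_2]$ falls into exactly one of three categories with respect to the update region's $x$-range $[x_{start}:x_{end}]$: (a) disjoint, (b) $[x_1:x_2]$ contains $[x_{start}:x_{end}]$ strictly (the \emph{dispersed} branch), or (c) $[x_1:x_2]$ is contained in or equal to $[x_{start}:x_{end}]$ (the \emph{intended} branch). Case (a) is discarded exactly as in the classical algorithm, so I would focus on (b) and (c), which is where the nested 1D update machinery is invoked.

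For case (c), the nested 1D tree $T_{n_i}$ is descended along the $y$ dimension and global values together with global lazy tags are modified by the usual lazy-propagation scheme, using Equations \eqref{eqn:globupd}--\eqref{eqn:globupdbacktrack}. Here I would invoke Lemma \ref{lemma:updglob} directly: the intended update precisely updates both every maximal covered node and every $y$-descendant beneath it, because the global lazy tag is carried along on any subsequent traversal. The backtracking recurrence is the standard ``sum of children plus own lazy times size'' identity, so the global values stored at the ancestors along the $y$-path stay consistent with the sums they are meant to represent.

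For case (b), I would first invoke Corollary \ref{cor:intersect}: trimming the $x$-range to $[\max(x_{start},x_1):\min(x_{end},x_2)]$ does not change the semantics of the update. The nested 1D tree $T_{n_i}$ is then descended with the \emph{scaled} constant $c'=\frac{x'_{end}-x'_{start}+1}{x_2-x_1+1}\,c$, and only local values and local lazy tags are modified. Lemma \ref{lemma:dspUpd} is exactly the statement that this scaling produces the correct contribution $c\times\|r\|$ to $n_i$'s stored representation rather than the incorrect $c\times\|r'\|$ one would get without scaling, so the local component faithfully records the effect of updating the proper $x$-subregion.

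The main obstacle I anticipate is not any single case but the bookkeeping that links them: one must argue that the nodes \emph{not} visited by the update nevertheless still satisfy the invariants that the subsequent query procedure relies on, and that the asymmetric use of global versus local components is consistent across the $x$-partition induced by the recursion. Concretely, the regions operated on by the $O(\log n)$ nodes visited along the $x$-path form a disjoint cover of $[x_{start}:x_{end}]$ together with at most two ``fringe'' ancestors whose $x$-range properly contains the update; the intended branch handles the covering nodes and the dispersed branch handles the fringes, and together with Lemmas \ref{lemma:updglob} and \ref{lemma:dspUpd} this exhausts the contributions the update must record. Combining these pieces yields the theorem.
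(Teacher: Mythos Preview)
Your overall strategy matches the paper's: a case analysis on the relationship between the node's $x$-range and the update's $x$-range, followed by invocation of Lemmas~\ref{lemma:updglob}, \ref{lemma:dspUpd}, and Corollary~\ref{cor:intersect}. The paper organizes the same argument around five region types $R_1,\dots,R_5$ (the update region itself, its superregions, its subregions, intersecting regions, and disjoint regions) rather than around the recursion, but the substance is identical.

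However, your trichotomy is incomplete. You claim every visited node falls into (a) disjoint, (b) $[x_1:x_2]\supsetneq[x_{start}:x_{end}]$, or (c) $[x_1:x_2]\subseteq[x_{start}:x_{end}]$; this misses the case of proper overlap where neither interval contains the other. For instance, updating $[3:6]$ on a tree over $[1:8]$: after the root, the children $[1:4]$ and $[5:8]$ both overlap $[3:6]$ without containing it or being contained in it. In the algorithm, and in the paper's proof as region type $R_4$, these nodes also receive a dispersed update after trimming, and it is precisely here that Corollary~\ref{cor:intersect} does real work---under your case (b) as stated, trimming is vacuous since $\max(x_{start},x_1)=x_{start}$ and $\min(x_{end},x_2)=x_{end}$. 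Relatedly, the claim that there are ``at most two fringe ancestors'' in the dispersed branch is wrong: there are $O(\log n)$ such nodes, forming the two arms of the standard segment-tree descent below the last node that strictly contains $[x_{start}:x_{end}]$. Once you broaden case (b) to ``$[x_1:x_2]$ overlaps $[x_{start}:x_{end}]$ but is not contained in it'' and drop the ``at most two'' count, your argument goes through and coincides with the paper's.
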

\begin{proof}
In order to establish the correctness of the update operation, it suffices to ensure that when updating a specific region, the proposed update algorithm modifies all the nodes associated to that region appropriately. Suppose, we  are updating a region $[x_{start}:x_{end},\,y_{start}:y_{end}]$. Without any loss of generality, five types of regions can be specified as follows (also see Figure \ref{fig:updates}):
\begin{enumerate}
    \item $R_1$ is the actual update region.
    \item $R_2$ belongs to the superregions of $R_1$, i.e., regions that encompass $R_1$ as a subregion.
    \item $R_3$ belongs to the subregions of $R_1$.
    \item $R_4$ belongs to the regions that intersect with $R_1$.
    \item $R_5$ belongs to the regions that are disjoint with $R_1$.
\end{enumerate}
\begin{figure}[h]
  \centering
    \includegraphics[width=0.75\textwidth]{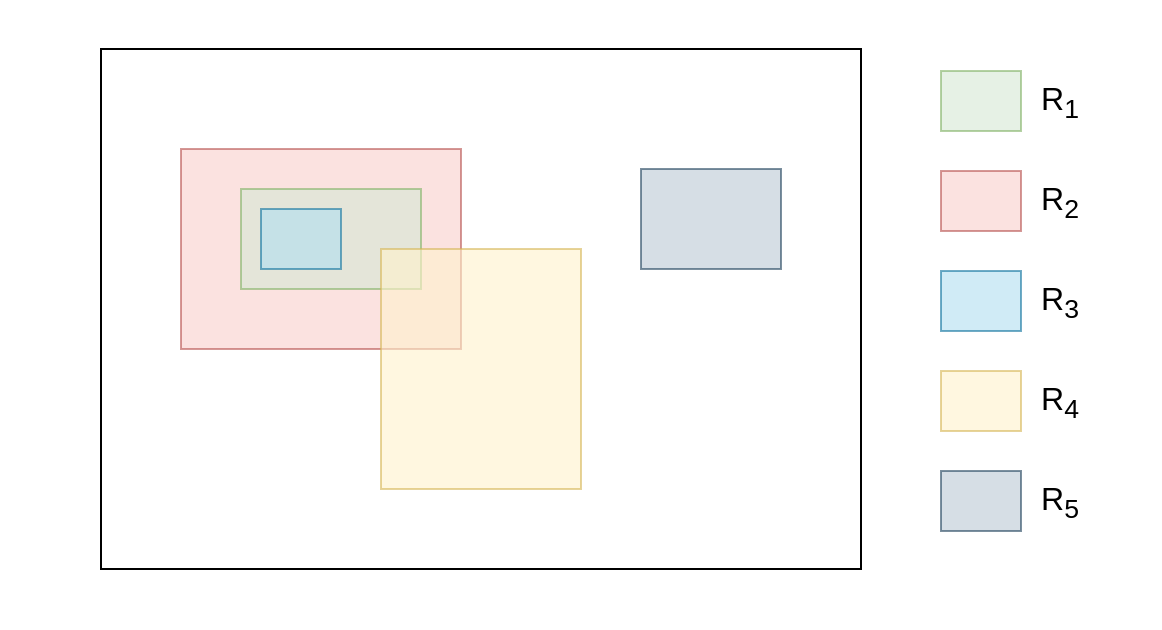}
    \caption{Different types of regions to consider while updating the region $R_1$. These include the superregions ($R_2$), subregions ($R_3$), intersecting regions ($R_4$) and completely disjoint regions ($R_5$)}
    \label{fig:updates}
\end{figure}
By definition, intended updates of the proposed algorithm update both the update region and its subregions properly ($R_1$ and $R_3$). Lemma \ref{lemma:dspUpd} demonstrated how the proposed `Dispersed Update' captures the outcome of updating the region $R_1$ in the superregions $R_2$. Corollary \ref{cor:intersect} extends this idea to the intersecting regions $R_4$. Finally we can safely omit the disjoint regions (i.e., $R_5$) completely (Section \ref{sec:algoupdate}). Hence the result follows.
\end{proof}

\begin{theorem}
\label{thm:query}
The Query operation is correct.
\end{theorem}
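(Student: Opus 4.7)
The plan is to mirror the structure used for Theorem \ref{thm:update}: classify, by relation to the query region, the nodes visited during the query and invoke the query-side lemmas to verify that each class contributes exactly what correctness demands.

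First I would fix a query region $R_Q \equiv [x_{start}:x_{end}, y_{start}:y_{end}]$ and describe the traversal. The algorithm descends the outer ($x$-dimension) segment tree; for every visited $x$-node $n_x$ with range $[x_1:x_2]$, a subordinate traversal is performed in its inner $y$-tree. Depending on how $[x_1:x_2]$ relates to $[x_{start}:x_{end}]$, this inner traversal is either a Partial Query (when $[x_1:x_2]$ strictly contains or merely intersects $[x_{start}:x_{end}]$) or a Complete Query (when $[x_1:x_2] \subseteq [x_{start}:x_{end}]$). Within the inner $y$-traversal, the classical 1D segment-tree recursion further selects $y$-nodes that are either fully contained in $[y_{start}:y_{end}]$ or overlap it.

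Next I would show each class contributes the correct amount. For Complete-Query nodes, Lemma \ref{lem:completeQuery} guarantees that both Intended Updates directed at the node's region and Dispersed Updates attached to it (which, by Lemma \ref{lemma:dspUpd}, faithfully represent updates actually targeted at its $x$-subregions) are reproduced by $n.global.value + n.local.value$ together with the propagated global and local lazy tags; no dilution is required since the node's $x$-range lies inside the query's $x$-range. For Partial-Query nodes, Lemma \ref{lem:partial} combined with Lemma \ref{lem:dilute} ensures that the diluted global value, plus the accumulated global-lazy tags multiplied by the size of the query region, emulates lazy propagation of Intended Updates along the $x$ dimension; after trimming analogous to Corollary \ref{cor:intersect}, the intersecting-but-not-contained case is handled on par with the strict $x$-superregion case. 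Regions disjoint from $R_Q$ are safely skipped, exactly as in the classical segment tree.

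The main obstacle will be arguing that these per-node contributions combine, upon back-propagation to the root, into the true sum $\sum_{(x,y) \in R_Q} A[x,y]$ with no omission and no double counting. The crux is that every update ever issued against the array must be recovered in the final answer exactly once and with the correct multiplicity. An Intended Update to a region $r'$ is stored at the unique $x$-node whose $x$-range matches $r'$; if that node is an ancestor (along $x$) of the visited Complete-Query nodes, the update is recovered via Partial Query through dilution, otherwise it is recovered via Complete Query through the global value. A Dispersed Update, by construction, lives only at an $x$-superregion of its actual target and is collected only via the local value during Complete Query, so it contributes exactly once. Because the visited $x$-nodes at which Complete Query is invoked partition $[x_{start}:x_{end}]$, and within each such slice the inner $y$-traversal is the classical 1D segment-tree traversal that analogously partitions $[y_{start}:y_{end}]$, every element of $R_Q$ contributes its current value exactly once. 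Combined with Theorem \ref{thm:update}, which certifies that the tree encodes the correct state of $A$, this establishes correctness of the query operation.
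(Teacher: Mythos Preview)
Your proposal is correct and follows essentially the same route as the paper: classify the visited nodes by their relation to the query region and invoke Lemmas \ref{lem:dilute}, \ref{lem:partial}, and \ref{lem:completeQuery} to certify that Partial Queries recover updates from $x$-superregions while Complete Queries recover updates to the region itself and its subregions. Your treatment is in fact more thorough than the paper's, which omits the explicit no-double-counting argument and the observation that the Complete-Query $x$-nodes partition $[x_{start}:x_{end}]$.
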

\begin{proof}
While querying on a region $r \equiv [x_{start}:x_{end}, y_{start}:y_{end}]$ , it is necessary that all the updates that were performed on regions associated with $r$ are compiled. Without any loss of generality, values and updates of 3 types of regions should be considered as follows (also see Figure \ref{fig:queries}):
\begin{enumerate}
    \item $R_1$ is the query region, i.e., the region of interest.
    \item $R_2 \in R_{sup}$, the set of the superregions of $R_1$.
    \item $R_3 \in R_{sub}$, the set of the subregions of $R_1$.
\end{enumerate}
\begin{figure}[h]
  \centering
    \includegraphics[width=0.5\textwidth]{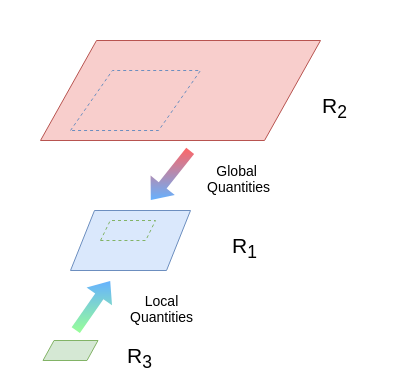}
    \caption{Different types of regions to consider while querying on a region $R_1$. Along with analyzing the region $R_1$ itself, it suffices to consider the superregions, $R_2$ (ancestors) and subregions, $R_3$ (descendents). The ancestors are connected though global values, whereas the descendents are linked by local values.}
    \label{fig:queries}
\end{figure}
Lemma \ref{lem:dilute} and Lemma \ref{lem:partial} show that the query algorithm compiles all the updates of the superregions of the region queried upon through `Partial Queries'. On the other hand, Lemma \ref{lem:completeQuery} proves the effectiveness of `Complete Query' in determining the values and the updates of the queried region and its subregions. Hence, as shown in the Figure \ref{fig:queries}, while querying on any region $[x_{start}:x_{end},y_{start}:y_{end}]$ the effects of updating the superregions are passed through global values ($value$ and $lazy$) and the impact of updating the subregions are compiled by combining the local values ($value$ and $lazy$). Therefore, the proposed query algorithm can query on any region correctly.
\end{proof}
Finally, based on the above arguments, particularly Theorem \ref{thm:update} and Theorem \ref{thm:query}, we can state the following:
\begin{theorem}[Correctness]
\label{thm:corr}
The proposed data structure handles range sum queries and dynamic updates correctly.\qed
\end{theorem}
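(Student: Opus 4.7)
The plan is to observe that Theorem \ref{thm:corr} is essentially a direct combination of Theorem \ref{thm:update} and Theorem \ref{thm:query}. Since the proposed data structure exposes only two operations---range update and range query---its overall correctness reduces to showing that, after any legal sequence of interleaved updates and queries, each individual operation produces the semantically correct outcome. The two preceding theorems handle each operation in isolation; what remains is only to chain them together over arbitrary operation histories.

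To do this I would proceed by induction on the length of the operation sequence. The base case is trivial: a freshly constructed tree has every global/local value and every global/local lazy equal to zero, which is consistent with the semantics on the underlying all-zero array. For the inductive step, suppose the invariants relating stored node data to the true array state hold before some operation. If that operation is an update, Theorem \ref{thm:update} guarantees that the update region, its subregions, superregions, and intersecting regions are all adjusted in the correct way through \emph{Intended} and \emph{Dispersed Updates} (the five region classes $R_1$--$R_5$ of Figure \ref{fig:updates}), so the invariants are restored afterwards. If the operation is a query, Theorem \ref{thm:query} guarantees that the three region classes of Figure \ref{fig:queries} contribute correctly: \emph{Partial Queries} reassemble the effects of superregions through diluted global values (Lemma \ref{lem:dilute} and Lemma \ref{lem:partial}), while \emph{Complete Queries} account for the region itself and its subregions through local and global values together (Lemma \ref{lem:completeQuery}). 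Adding these contributions yields the true range sum.

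The main obstacle---modest as it is---is ensuring that Theorem \ref{thm:update} and Theorem \ref{thm:query} can be legitimately invoked on a tree whose state has been produced by an arbitrary prior history rather than from a clean slate. This is fine because each supporting lemma (Lemma \ref{lemma:dspUpd}, Lemma \ref{lem:dilute}, Lemma \ref{lem:partial}, Lemma \ref{lem:completeQuery}) is phrased purely in terms of the \emph{current} values and lazies stored in the tree, not the sequence of operations that produced them; and a query does not alter the semantic content of the tree, since any lazy propagation it triggers merely redistributes existing markers without changing the sum represented by any subtree. Composing the two theorems under this observation yields Theorem \ref{thm:corr} immediately.
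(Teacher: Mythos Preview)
Your proposal is correct and follows essentially the same approach as the paper, which simply declares the theorem an immediate consequence of Theorem~\ref{thm:update} and Theorem~\ref{thm:query} and marks it with a \qed. Your induction over operation sequences is a more explicit version of that same combination; the paper omits this scaffolding entirely.
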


%It is sufficient to state that, the proposed algorithm for modifying the Segment Tree is correct iff:
%
%\begin{enumerate}
%    \item The update operation is correct
%    \item The query operation is correct
%\end{enumerate}
%
%In Theorem \ref{thm:update} we have shown the correctness of the update operation and in Theorem \ref{thm:query} the soundness of the query operation has been addressed. Thus, combining these two together, we have established the correctness of both the update and query operation of the proposed dynamic data structure. As a result, the proposed algorithm is correct. \qed

\subsection{Analysis}
\label{sec:ana}
\subsubsection{Space Complexity}
A Segment Tree on an array of length $n$ is a binary tree with height $\ceil{\log n}$. 
%It has 1 ($2^0$) nodes at level 0 (i.e., the root), 2 ($2^1$) nodes at level 1, 4 ($2^2$) nodes at level 2, … , $2^{ceil(logn)}$ nodes at level $ceil(logn)$ level. 
The total number of nodes in a Segment Tree is $1 + 2 + 2^2 + 2^3 + ... + 2^{\ceil{\log n}}= 2\times 2^{\ceil{\log n}-1} \approx 2 \times n$. %Thus for a segment tree only a memory of twice the number of elements is needed. 
Now our proposed algorithm holds another Segment Tree inside each node. So, for a 2D array of size $n \times m$ the total space required is $O(2\times n \times 2\times m) = O(4\times n\times m) = O(n\times m)$.

\subsubsection{Time Complexity}
\label{sec:time}
1D Segment Tree (on a $n$-sized array) efficiently performs update and query operations in logarithmic time, i.e., $O(\log n)$ \cite{de1997computational}.
%Hence, for both update and query, the time complexity is $O(logn)$. 
This follows trivially since the Segment Tree continuously divides the segments into two equal parts and a segment of length $n$ can be divided into two equal parts for at most $\log n$ time. %, the time complexity of Segment Tree operations are bounded by $O(logn)$.
Our proposed 2D Segment Tree similarly visits $O(\log n)$ number of nodes. But each of these nodes contain another 1D Segment Tree inside, incurring $O(\log m)$ time each. And this accounts for both repeated `Dispersed Updates' and `Partial Queries' at the nodes. Therefore, the overall time complexity becomes = $O(\log n \times \log m) = O(\log^2 n)$, assuming $n\geq m$, a huge improvement over the previous results of $O(n\log n)$ \cite{mishra2013new}.
%For $n=m$, it becomes $O((logn)^2)$, which is a great improvement from $O(nlogn)$ \cite{mishra2013new}

\subsubsection{Experimental Study}
In order to validate the time complexity experimentally, we performed a number of random tests. We implemented 2D Segment Trees using our approach on two dimensional arrays of size $n \times n$ for $n=5$ to $900$. We then performed 100 random updates. Each update was followed by 100 random queries. The average time needed for the operations were calculated and plotted in a graph. We fit a $c \log^2 n$ curve and it was seen that for both update and query functions the time required follows this curve. These results are presented in Figure \ref{fig:time}. However, the times needed for updates seem a bit noisy, which is due to the fact that we took less number of samples for update operations (100) compared to that of the query operations ($100 \times 100 = 10000$).

\begin{figure}
    \centering
    \includegraphics[width=\textwidth]{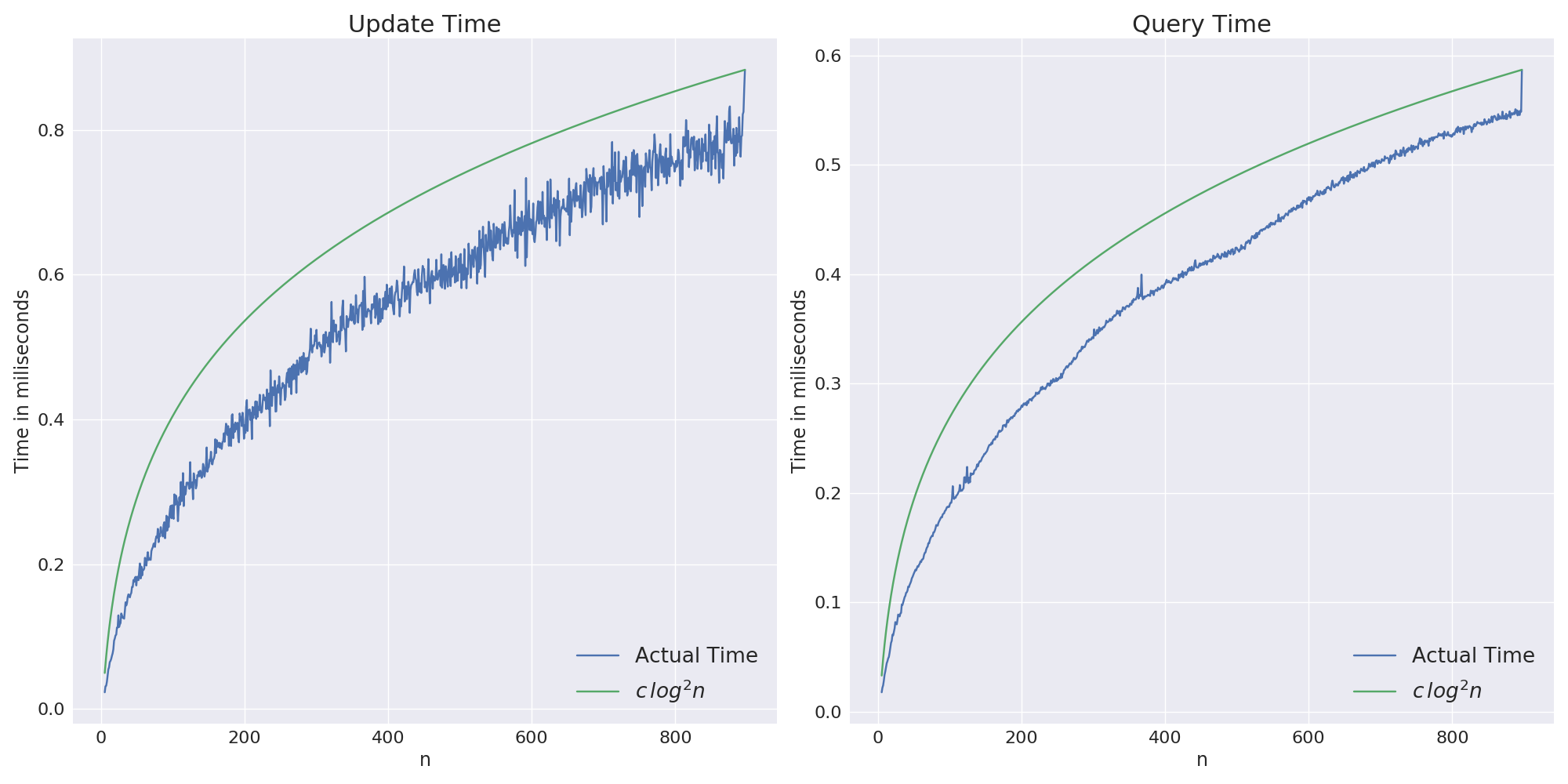}
    \caption{Average time needed for update and query operations by the proposed Segment Tree}
    \label{fig:time}
\end{figure}

Computing the time duration of the operations is not reliable enough as some resources of the CPU may be occupied by some other processes during one computation but free during some other computation. CPUs are also prone to thermal throttling. Thus, we also calculated the average number of nodes visited for the operations. These results are presented in Figure \ref{fig:step}. Similarly, in this case, it can be seen that the required number of steps follow the shape of a fitted $c\log^2 n$ curve.

\begin{figure}
    \centering
    \includegraphics[width=\textwidth]{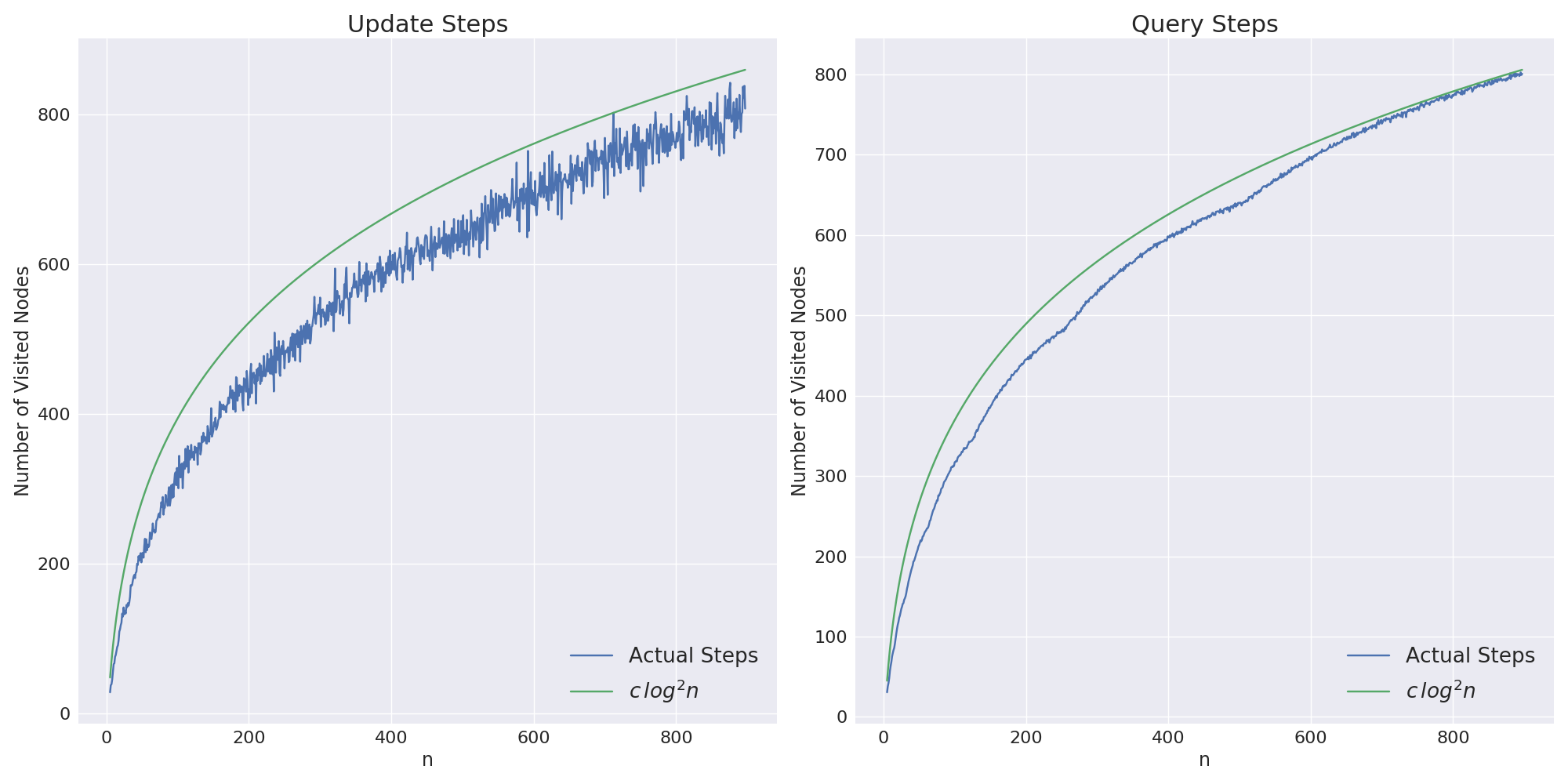}
    \caption{Average number of nodes visited during update and query operations by the proposed Segment Tree}
    \label{fig:step}
\end{figure}

\section{Discussions}
\label{sec:dis}
\subsection{Generalization}

Till now, we have focused on modifying the Segment Tree data structure in such a way that it can solve the range sum query problem over a two-dimensional array. However, the proposed algorithm can not only be extended to higher-dimensional problems but also can be utilized with other types of aggregate functions. In this section, we briefly mention the generalization of our approach. First, we present the following theorem.

\begin{theorem}
\label{thm:multi}
The proposed algorithm can perform range sum query on $d$-dimensional data in $O(\log^d n)$ time.
\end{theorem}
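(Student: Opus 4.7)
The plan is to proceed by induction on the dimension $d$, using the two-dimensional result established in Theorem \ref{thm:corr} (correctness) together with the $O(\log^2 n)$ bound from Section \ref{sec:time} as the base case $d=2$; the case $d=1$ is the classical segment tree with lazy propagation. The construction that underlies the induction is the natural one: a $d$-dimensional segment tree is a 1D segment tree along the outermost dimension, and every node of this outer tree carries, as its payload, a $(d-1)$-dimensional segment tree (of the kind promised by the inductive hypothesis) that operates on the remaining $d-1$ dimensions for precisely the slab of the outer range owned by that outer node. The space bound $O(n^d)$ follows immediately from the same geometric-series argument used for the 2D case.

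For the inductive step I would first lift the two-kind decomposition of the 2D algorithm to $d$ dimensions verbatim: each payload tree stores both \emph{global} and \emph{local} components of $value$ and $lazy$, where the global component records Intended Updates that matched the outer range exactly, and the local component records Dispersed Updates (the scaled contribution of outer-ranges that only partially overlapped). The update algorithm then walks the outer 1D tree; at every visited outer node it invokes the inductive $(d-1)$-dimensional update procedure on the node's payload tree, passing either the unscaled constant $c$ (Intended case, when the outer range matches) or the scaled constant $c' = \frac{x'_{end}-x'_{start}+1}{x_2-x_1+1}\cdot c$ (Dispersed case, when the outer range only overlaps). The query algorithm walks the outer 1D tree symmetrically, issuing a Partial Query (global values only, diluted by $scaling_n$) against the payload when the outer range is a strict superregion, and a Complete Query (global $+$ local, no dilution) against the payload when the outer range matches or is a subregion.

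The correctness argument for the inductive step reuses Lemmas \ref{lemma:updglob}--\ref{lem:completeQuery} unchanged in spirit: the same five-type case analysis of Theorem \ref{thm:update} classifies regions encountered along the outermost split (encompassing, subregion, intersecting, disjoint, or the region itself), and the same three-type analysis of Theorem \ref{thm:query} classifies regions seen by the query. Each inner invocation is handled correctly by the inductive hypothesis, since by construction the inner problem is a range-sum update/query on a $(d-1)$-dimensional region operating on the slab owned by the outer node. The scaling and dilution identities are purely one-dimensional multiplicative factors on the outer coordinate, so they commute with whatever the inner $(d-1)$-dimensional tree does, and in particular Lemma \ref{lem:dilute} transfers verbatim to the outer dimension of the $d$-dimensional tree.

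For the complexity bound, the outer 1D walk visits $O(\log n)$ nodes (Section \ref{sec:time}), and at each visited outer node the algorithm performs exactly one Partial/Complete/Intended/Dispersed call on the inner payload tree, which by the inductive hypothesis costs $O(\log^{d-1} n)$; multiplying gives the claimed $O(\log^d n)$ bound. The main obstacle I anticipate is verifying rigorously that the global/local bookkeeping still behaves correctly when nested $d-1$ levels deep, i.e., that the inner $(d-1)$-dimensional Complete/Partial distinction is not disturbed by the outer dilution factor, and that the outer backtracking-update rule (Equation \ref{eqn:globupdbacktrack}) generalises when the children's stored values are themselves full $(d-1)$-dimensional segment trees rather than scalars; this amounts to showing that the recomputation rule applies component-wise at every inner node, which can be justified by expanding the definition of $n.global.value$ as a formal sum over the leaves of the payload and reusing the 2D argument on each such leaf.
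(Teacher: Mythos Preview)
Your proposal is correct and follows essentially the same approach as the paper: induction on the dimension $d$, with the $d=2$ case supplied by Theorem~\ref{thm:corr} and Section~\ref{sec:time}, and the inductive step obtained by wrapping a $(d-1)$-dimensional tree inside each node of an outer 1D segment tree so that the $O(\log n)$ outer traversal multiplies the $O(\log^{d-1} n)$ inner cost. Your write-up is in fact more explicit than the paper's own proof about how the global/local bookkeeping and Lemmas~\ref{lemma:updglob}--\ref{lem:completeQuery} lift through the induction, and the obstacle you flag (that Equation~\eqref{eqn:globupdbacktrack} must be applied component-wise when the stored values are themselves trees) is a genuine detail the paper leaves implicit.
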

\begin{proof}
For $d=1$, we can use the classical 1D Segment Tree, thus it suffices to prove this theorem for $d\geq2$. We prove this theorem using mathematical induction.

\textbf{Base Case}: $k=2$

Base case is proved by Theorem \ref{thm:corr} and Section \ref{sec:time}.

\textbf{Induction step}: Let, for $p \in \mathbb{Z}$, $p > 2$, the statement is true.

Hence, we are capable of performing dynamic range sum queries on data of $p$-dimensions in $O(\log^p n)$ time. This also establishes that for $p$-dimensional data it is possible to implement the proposed scheme of lazy propagation by using the `Dispersed Updates' and `Partial Queries'. Now, let us introduce another external dimension to the $p$-dimensional problem. Since the $p$-dimensional subspace already supports lazy propagation within it, we can simply propagate the global values from the outermost new dimension to the inner $p$-dimensional space. Thus, lazy propagation can be implemented. Furthermore, assuming the range of the outer dimension is of length $n$, it can be divided into two segments at most $O(\log n)$ time. Thus, the overall time complexity becomes $O(\log n \times \log^p n) = O(\log^{(p+1)} n)$. Hence, the result follows.

%Thus, if $d=p$ satisfies the statement, $d=p+1$ satisfies it as well. 
%
%\textbf{Conclusion}: By the principle of induction, the statement is true for all $k \in \mathbb{Z}$ amd $k \geq 2$.
%Hence, the result follows.
\end{proof}

%\begin{theorem}
%\label{thm:aggre}
%The proposed algorithm can extend to other aggregate functions as well.
%\end{theorem}

Finally, we discuss how the proposed algorithm can be extended to other aggregate functions as well. In the proposed algorithm we pass the information from $x$ dimension to $y$ dimension through the use of `Partial Query' and `Dispersed Update'. Together, they mimic the lazy propagation procedure. The most important part in both these ideas is to scale the regions properly. Through scaling, we acknowledge the effects of the individual elements of the subarray. Since the operations always impose an identical change to all the elements of a region, by considering how the individual elements change, we can segment away the effect of updating a certain region of the whole space. Thus almost any aggregate function can be scaled in this away and only a selected portion of the region can be considered.

Since all our analyses were focused on performing sum queries, we performed divisions to scale them. However, if we were to perform multiplication queries it would require computing the roots to ensure proper scaling. On the other hand for operations like AND or OR, no complicated scaling is necessary. Hence it is always possible to extract how an individual element is modified while performing an aggregate operation and it is possible to split up the effect of updating a specific portion of the region. Thus, the proposed algorithm can be extended to other aggregate functions as well.

Our Python\cite{28van2007python} implementation of the proposed algorithm along with generalization to solving other dynamic range queries using this algorithm can be found in: 

\centerline{\url{https://github.com/nibtehaz/Multidimensional-Segment-Tree}}

\subsection{Comparison with other approaches}

Range query problems are frequently used in computer science. For one dimensional variant, these problems can be solved efficiently and effortlessly using Segment trees \cite{halim2013competitive}. In this section, we describe the relative suitability of the various algorithms and data structures in order to solve the two-dimensional range sum query problem and present a comparison with our proposed algorithm.

In a naive brute force manner, one can update ranges and compute the range queries of a $n \times m$ array in $O(nm)$ time, which becomes highly inconvenient when the number of queries is high. As a result, a number of data structures and algorithms have been proposed to solve the range query update problem efficiently.

Sparse tables, developed by using the ideas of dynamic programming \cite{halim2013competitive}, are capable of returning queries in constant time. However, the downfall of this data structure is that it is immutable; after initializing it in $O(nm\log n\log m)$ time, the elements in the sparse table cannot be updated. Square root decomposition, also known as Mo's algorithm, partitions the 2D array into $p \times q$ grids with $p=\sqrt{n}, q=\sqrt{m}$ \cite{halim2013competitive}. This data structure precomputes the sums of the elements of these grids. Using these grid information, this data structure is capable of updating the entries, but still it requires a time complexity of $O(\sqrt{n} \times \sqrt{m})$.

Tree-based data structures are promising in this regard. Quadtrees follow a divide and conquer approach, by recursively dividing the entire region into four squares or near square subregions, and combining them in order to perform range queries \cite{samet1988overview}. This procedure is quite similar to that of a Segment Tree. Although it performs query and update operations in $O(\log n)$ (provided $n=m$) time when working on a square region, it suffers greatly when the regions are not of the square shape. In the worst cases, for example, when the regions are of size $1 \times n$ or $n \times 1$, it has to traverse all the way down to the individual leaf nodes. This makes the time complexity $O(n)$ (for $d$-dimensional problems this becomes $O(n^{d-1})$) \cite{mishra2013new}. Fenwick Trees were designed to work on cumulative frequency tables \cite{fenwick1994new}; however, they can be adapted to perform range queries as well. The original Fenwick Tree was only capable of performing point updates and range query operations. But they can be slightly modified to perform range update and point query as well \cite{halim2013competitive}. Mishra developed a novel approach to perform range update and range query simultaneously using  Fenwick Trees \cite{mishra2013new}. Although this data structure manages to perform these operations in $O(\log n \log m)$ time ($O(\log^2 n)$, when $n\geq m$) and $O(n \times m)$ memory, it requires an exponential number of trees to do so ($4^d$ for $d$ dimensional problem) \cite{mishra2013new}. Furthermore, in order to perform a query it requires 4 query operations on the trees, and to update it demands a total of 36 update operations on the trees, each running in $O(\log n \log m)$ time. These requirements increase exponentially with the number of dimensions. All these issues make this data structure quite cumbersome, despite that it is the only data structure capable of performing these queries in asymptotically poly-logarithmic time \cite{mishra2013new}. Moreover, the algorithm presented for this data structure is strictly for range sum query only, without any provision of generality.

In this paper, we have presented an innovative approach to perform range update and range queries using 2D Segment Trees. Our proposed algorithm requires $O(\log n \log m)$ time for these operations ($O(\log^2 n)$, when $n\geq m$) and $O(n \times m)$ memory. Also, our algorithm generalizes for higher-dimensional cases and other aggregate functions similar to the original Segment Tree.

%A relative comparison of the various data structures and algorithms are presented in Table \ref{tab:comp}.

% Please add the following required packages to your document preamble:
% \usepackage{multirow}

%\begin{table}[h]
%\caption{Comparison of the Different Approaches}
%\begin{tabular}{|c|c|c|}
%\hline
%Data Structure & Operation Time Complexity & Memory Requirement  \\ \hline \hline
%Naive approach  & $O(n\,m)$ & $O(1)$ \\ \hline
%Sparse Table & $O(1)$ & $O(n\,m\,\log n \log m)$ \\ \hline
%Sqrt Decomposition  & $O(\sqrt{n}\,\sqrt{m})$  & $O(\sqrt{n}\,\sqrt{m})$  \\ \hline
%\multirow{2}{*}{\begin{tabular}[c]{@{}c@{}}Quadtree \\ (Assuming $n=m$)\end{tabular}} & $O(\log n)$, for square region & \multirow{2}{*}{$O(n^2)$} \\ \cline{2-2} & $O(n)$, otherwise   &  \\ \hline
%Fenwick Tree  & $O(\log n \log m)$   & $O(n\,m)$ \\ \hline
%Proposed Segment Tree  & $O(\log n \log m)$ & $O(n\,m)$ \\ \hline
%\end{tabular}
%\label{tab:comp}
%\end{table}

\section{Conclusion}
\label{sec:conc}

In this paper, we have developed a novel approach to perform dynamic range queries and updates efficiently on higher dimensional data using  Segment Trees. This introduces a lot of new opportunities to utilize this highly versatile data structure in solving complex multi-dimensional problems. The future directions of our research will be to reduce some unorthodox problems to range query and update problems and exploit the effectiveness of the proposed Segment Tree in solving such problems.

\appendix

\section{Overview of the Original Segment Tree Algorithm}
\label{sec:aover}

\subsection{Definition}
A Segment Tree $T$, defined on an array $A$, is a complete binary tree \cite{de1997computational}. Any intermediate node $node_i$ operates on a segment $i_{start}\;to\;i_{end}$ and stores the value of a function $f([A[i_{start}], \dots ,A[i_{end}]])$ computed on that segment. The node $node_i$ has left and right child nodes $node_l$ and $node_r$ , who are defined on the ranges $[i_{start} \dots  \floor {\frac{i_{start}+i_{end}}{2}}]$ and $[\floor{\frac{i_{start}+i_{end}}{2}} + 1 \dots i_{end} ]$ respectively. This formulation goes on until the leaf nodes of the tree are reached. A leaf node $node_{leaf}$ is coupled to only a particular element $A[leaf]$ of the array $A$, and it stores the value of the function $f([A[leaf]])$ computed on that specific element.

Segment Tree data structures are particularly useful in updating and querying on a segment, as apparent from their name. Although Segment Trees are capable of computing a diverse set of functions, the simplest application of Segment Tree is to solve the range sum query problem.

\subsection{Updating the Segment Tree}
\label{sec:update1d}

Segment Tree can modify an element of the array $A$, it is defined on and subsequently, update the values of all the corresponding segments efficiently in logarithmic time. Let, we want to add a constant value $c$ to the $i^{th}$ element of the array $A$. In order to update that element we first start from the root node of the Segment Tree, which covers the entire segment $1 \; to \; n$. Next, we divide the segment into two segments $1 \; to \; \floor{\frac{1+n}{2}}$ and $\floor{\frac{1+n}{2}+1} \; to \; n$, which corresponds to the left and right child of the root node respectively. Then, we visit the node that represents the segment containing the index $i$, and repeat the process until the actual leaf node operating on index $i$ is reached. Hence, we update the value of that node by adding $c$ to it. Finally, we backtrack to the root node and along the way, update all nodes by setting their value to the sum of their two children.

\begin{equation}
leaf\_node.value = leaf\_node.value + c    
\end{equation}

\begin{equation}
intermediate\_node.value = left\_child.value + right\_child.value
\end{equation}
\subsection{Querying on a Range}

In order to query the sum of the elements of a range $[i, \dots ,j]$, we follow an approach similar to the update operation. Again we start from the root node and start dividing the segments into two equal parts until we obtain segments that completely lie within the query range. Then we recursively add the values of all such segments and return their sum. The query operation is also performed in logarithmic time $O(logn)$

\subsection{Lazy Propagation}
\label{app:lazy}

In Section \ref{sec:update1d} we described the algorithm to update a specific array element using Segment Tree. In order to update a range, we can simply repeat the step $m$ times, where, $m$ is the length of the range. However, this leads to a time complexity of $O(nlogn)$, considering the worst case scenario, i.e., updating the entire array. Segment Tree overcomes this limitation by following an elegant process called lazy propagation. In order to do so, we keep another value called `lazy update' in each of the nodes. The lazy update value accounts for updating all nodes that are predecessors of the current node, or equivalently all proper subsegments of the segment we are working on. During querying the sequence these lazy update values are passed on to the child nodes and thus the lazy updates are propagated to the deeper nodes of the tree.

In this process, instead of strictly going to the individual leaf nodes, if in any state we are at a node where the corresponding segment completely lies within the update range, we perform a lazy update at that node. We do this by updating the value of the node by adding the product of $c$ and the length of the segment covered by that node. We also increment the lazy update value by $c$, which signifies that the elements of all the subsegments of this segment should also be updated by adding $c$ to them. This method of updating quite resembles the query operation, where not only we work on leaf nodes but also work on intermediate nodes enclosed in the query segment. 

Lazy propagation allows the Segment Tree to update ranges efficiently in logarithmic time $O(logn)$ as well.

\subsection{Higher Dimensional Segment Trees}

The 1 Dimensional Segment Tree can be extended to higher-dimensional cases as well. The most basic one is the 2 Dimensional Segment Tree which is defined on a matrix or 2D Array $A$ of size $n \times m$. 

In this case, we have a Segment Tree $T$, which is a complete binary tree. The nodes of this tree operate on segments along the first dimension, i.e., the root node is defined on the range $1 \; to \; n$. Similar to the one dimensional Segment Tree, the root node has a left child and a right child who operates on ranges $1 \; to \; \floor{\frac{n+1}{2}}$ and $\floor{\frac{n+1}{2}}+1 \; to \; n$ respectively. The child nodes of these intermediate nodes similarly divide the working segment into two subsegments, and this division goes on until the leaf nodes are reached.

However, for this variant of Segment Tree instead of the nodes containing the computed value of a function over a segment, they contain another Segment Tree inside. This second layer of Segment Tree $T'$, is on the contrary defined on segments along the 2nd dimension. This implies that the root node of tree $T'$ operates on the segment $1 \; to \; m$, its child nodes operates on ranges $1 \; to \; \floor{\frac{m+1}{2}}$ and $\floor{\frac{m+1}{2}}+1 \; to \; m$, and so on.

Thus, effectively a 2D Segment Tree is a Segment Tree of Segment Trees. This two-layer representation offers two ranges along two different dimensions. These two ranges define a rectangular region of the two-dimensional array the tree is defined on. For example, suppose we are at a node $node_i$ of the 1st layer Segment Tree, which works on the segment $i_1 \; to \; i_2$ along the 1st dimension. Now, inside the node $node_i$ the 2nd layer of the Segment Tree is defined. Let us assume in the 2nd layer we are at a node $node_{i_j}$, that operates on the range $j_1 \; to \; j_2$ along the 2nd dimension. Thus we are at the same time considering two ranges, $i_1 \; to \; i_2$ along the first dimension and $j_1 \; to \; j_2$ along the second dimension. This limits our analysis to the region $A[i_1:i_2,j_1:j_2]$. Thus, using the two-dimensional Segment Tree we can perform updates and queries on a subregion of a two-dimensional array.

However, the two-dimensional Segment Tree can be considered only for point updates \cite{algomax}. It falls short that it does not support lazy propagation. Mishra \cite{mishra2013new} proved that the 2D Segment Tree can perform lazy propagation only along the last i.e., the 2nd dimension only. Which effectively makes the time complexity of the update operation $O(nlogn)$.

Similarly, by cascading more layers of Segment Trees higher dimensional Segment Trees can be constructed and utilized. However, these higher dimensional variants also fall short to lazy propagation, as they are only capable of supporting it along the innermost dimension \cite{mishra2013new}. As a result, for a $k$ dimensional tree, the time complexity becomes $O(n^{k-1}logn)$

\clearpage
\bibliographystyle{unsrt}
\bibliography{bibilography}

\end{document}